\documentclass[11pt,a4paper]{article}

\usepackage{amsmath,amssymb,amsthm}

\usepackage{fullpage}
\usepackage{cite}
\usepackage{textcomp}
\usepackage{mathrsfs}
\usepackage{graphicx}
\usepackage{color}
\usepackage{float}
\DeclareMathOperator{\tr}{tr}

\DeclareMathOperator*{\essinf}{ess\,inf}
\DeclareMathOperator*\sgn{sgn}

\numberwithin{equation}{section}

\theoremstyle{plain}
\newtheorem{theorem}{Theorem}[section]
\newtheorem{proposition}[theorem]{Proposition}
\newtheorem{corollary}[theorem]{Corollary}
\newtheorem{lemma}[theorem]{Lemma}
\theoremstyle{definition}
\newtheorem{definition}[theorem]{Definition}
\newtheorem{assumption}[theorem]{Assumption}

\theoremstyle{remark}
\newtheorem{remark}[theorem]{Remark}

\usepackage{parskip}
\parskip=10pt
\makeatletter
\def\thm@space@setup{%
  \thm@preskip=\parskip \thm@postskip=5pt
}
\makeatother

\begin{document}

\title{Continuous viscosity solutions to linear-quadratic stochastic control problems with singular terminal state constraint\thanks{Financial support by {\sl d-fine GmbH} is gratefully acknowledged. We thank Paulwin Graewe for many discussions and valuable comments. We thank two anonymous referees for valuable comments and suggestions that greatly helped to improve the presentation of the results.}}

%    Information for first author
\author{Ulrich Horst\footnote{Department of Mathematics, and School of Business and Economics, Humboldt-Universit\"at zu Berlin
         Unter den Linden 6, 10099 Berlin, Germany; email: horst@math.hu-berlin.de} ~and Xiaonyu Xia \footnote{Department of Mathematics, Humboldt-Universit\"at zu Berlin
         Unter den Linden 6, 10099 Berlin, Germany; email: xiaxiaon@math.hu-berlin.de}}

\maketitle

\begin{abstract}
This paper establishes the existence of a unique nonnegative continuous viscosity solution to the HJB equation associated with a linear-quadratic stochastic control problem with singular terminal state constraint and possibly unbounded cost coefficients. The existence result is based on a novel comparison principle for semi-continuous viscosity sub- and supersolutions for PDEs with singular terminal value. Continuity of the viscosity solution is enough to carry out the verification argument. 
\end{abstract}

{\bf AMS Subject Classification:} 93E20, 91B70, 60H30.

{\bf Keywords:}{~HJB equation, viscosity solution, terminal state constraint}

\section{Introduction}

Let $T \in (0,\infty)$ and let $(\Omega,\mathcal F,(\mathcal F_t)_{t\in[0,T]},\mathbb P)$ that satisfies the usual conditions and carries a Poisson process $N$ and an independent $\tilde d$-dimensional standard Brownian motion $W$. We analyze the linear-quadratic stochastic control problem 
\begin{equation}\label{P1}
	\essinf_{\xi,\mu} E\left[\int_0^T\eta(Y_s) |\xi_s|^2+\theta\gamma(Y_s)|\mu_s|^2+\lambda(Y_s)|X_s^{\xi,\mu}|^2\,ds\right]
\end{equation}
subject to the state dynamics
\begin{equation}\label{P2}
\begin{split}
	dY^{}_t &= b(Y^{}_t) dt + \sigma(Y^{}_t) dW_t, \quad Y^{}_0=y \\
	dX_t^{\xi,\mu} &=- \xi_t\,dt- \mu_t\,dN_t, \quad X^{\xi,\mu}_0=x 
\end{split}
\end{equation} 
and the terminal state constraint 
\begin{equation}\label{P3}
	X^{\xi,\mu}_{T}=0.
\end{equation}
We assume that { $\theta$ is a positive constant, that} the cost coefficients $\eta,\lambda,\gamma$ are continuous and of polynomial growth, that $\eta$ is twice continuously differentiable and that the diffusion coefficients $b,\sigma$ are Lipschitz continuous. We prove the existence of a unique continuous viscosity solution to the resulting HJB equation and give a representation of the optimal control in terms of the viscosity solution. 

Control problems of the form (\ref{P1})-(\ref{P3}) arise in models of optimal portfolio liquidation under market impact when a trader can simultaneously trade in a primary venue and a dark pool. Dark pools are alternative trading venues that allow investors to reduce market impact and hence trading costs by submitting liquidity that is shielded from public view. Trade execution is uncertain, though, as trades will be settled only if matching liquidity becomes available. In such models, $X^{\xi,\mu}$ describes the portfolio process when the traders submits orders at rates $\xi$ to the primary venue for immediate execution and orders of sizes $\mu$ to the dark pool. Dark pools executions are governed by the {Poisson} process $N$ {with rate $\theta$}.  The process $\eta$ describes the instantaneous market impact; it often describes the so-called market depth. The process $\gamma$ describes adverse selection costs associated with dark pool trading while $\lambda$ usually describes market risk, e.g. the volatility of a portfolio holding.    

Starting with the work of Almgren and Chriss \cite{Almgren2001}  portfolio liquidation problems have received considerable attention in the financial mathematics and stochastic control literature in recent years; see \cite{Ankirchner2014, Graewe2018, Graewe2015, Horst2014, Kratz2014, Kratz2015,Kruse2016,Popier2006,Popier2017,Schied2013a} and references therein for details. From a mathematical perspective one of their main characteristics is the singular terminal condition of the value function induced by the terminal state constraint \eqref{P3}. The constraint translates into a singular terminal state constraint on the associated HJB equation and causes significant difficulties in proving the existence and, even more so, the uniqueness of solutions to that equation.
 
Under a continuity and polynomial growth condition on the cost coefficients $\eta,\lambda,\gamma$ it has been shown in \cite{Graewe2018} that the HJB equation admits at most one continuous viscosity solution of polynomial growth. The proof  used a comparison principle for {\sl continuous} viscosity solutions to PDEs with singular terminal value. Since the comparison principle applies only to continuous functions, it can not be used to establish the {\it existence} of a viscosity solution. Instead, it was shown in \cite{Graewe2018} that a (unique) {\sl classical} solution to the HJB equation exists under strong boundedness and regularity assumptions on the model parameters. In this paper we prove a novel comparison principle for {\sl semi-continuous} viscosity solutions for PDEs with singular terminal value from which we deduce the existence of a continuous viscosity solution to our HJB equation using Perron's method. The existence of a {\sl continuous} viscosity solution is enough to carry out the verification arguments and to give a representation of the optimal control in feedback form. 

There are several papers that provide verification arguments without assuming continuity of viscosity solutions. For instance, a utility optimization problem with delays and state constraints has been considered in \cite{Federico2011}. The authors solved in the viscosity sense the associated HJB equation under the assumption that the utility function satisfies the Inada condition, a condition that is not satisfied in our model. In \cite{Belak2017}, the authors studied the general verification result for stochastic impulse control problems, {\sl assuming} that a comparison principle for discontinuous viscosity solutions of the HJB equation holds. This is a very strong hypothesis that can be avoided in our case. The linear-quadratic structure of our control problem allows us to characterize the value in terms of a PDE without jumps, and the verification argument can be given in terms of the associated FBSDE after the existence of the viscosity solution has been established. 

To the best of our knowledge, existence of {\sl continuous} solutions to HJB equations associated with control problems of the form (\ref{P1})-(\ref{P3}) has so far only been established under $L^\infty$ assumptions on the model parameters. The existence of unique continuous viscosity solution was established when $\eta$ is a constant and $\lambda$ is of polynomial growth in \cite{Ankirchner2012a}. Existence and uniqueness of solutions in suitable Sobolev spaces for bounded stochastic cost and diffusion coefficients was proved in \cite{Graewe2015,Horst2016}; classical solutions were considered in \cite{Graewe2018}. 

The restriction to constant market impact terms and/or bounded impact functions and diffusion coefficients is unsatisfactory. In a portfolio liquidation framework, it is natural to choose a two-dimensional driving factor where the first component is a mean-reverting process, e.g. an Ornstein-Uhlenbeck process that describes a liquidity index and the second component is a geometric Brownian motion with zero drift that describes the dynamics of the unaffected stock price process. It is then natural to chose $\eta$ to be a strictly monotone unbounded function of the liquidity index and $\lambda$ to be the square of the geometric Brownian motion so that market risk is measured by the volatility of the portfolio value. Our results apply to such setting.   

The papers \cite{Ankirchner2014, Kruse2016, Popier2017} allow for unbounded coefficients. They characterize the value function as the {\sl minimal solution} to some BSDE with singular terminal value. BSDEs with singular terminal value were first studied in \cite{Popier2006}. In \cite{Popier2017} the same author showed that the minimal solution to a certain singular BSDE yields a probabilistic representations of {\sl a} (possibly discontinuous) viscosity solution to the associated PDE. Our comparison result yields sufficient conditions for this minimal viscosity solution to be the unique (and hence continuous) solution. This complements the analysis is \cite{Ankirchner2014,Kruse2016}. {The existence (and uniqueness) of minimal solutions to BSDEs with singular terminal values for more general drivers has recently been established in \cite{Graewe2019} under (suitable regularity and) boundedness assumptions on the model parameters.} The framework in  \cite{Schied2013a} allows for unbounded coefficients but requires strong a priori estimates on the market impact term that are not satisfied in our main example. Complementing the analysis in  \cite{Schied2013a} our results show when value function derived in terms of Dawson-Watson superprocesses therein solves the HJB equation in the viscosity sense. 

The remainder of this paper is organized as follows. In Section \ref{main}, we summarize our main results. The existence of viscosity solution is proved in Section \ref{existence}; the verification argument is carried out in Section \ref{veri-viscosity}. Section \ref{non-markov} is devoted to an extension of our uniqueness result to a non-Markovian model with unbounded coefficients. 

\textit{Notation.}
We denote by $C_b(\mathbb R^d)$ the set of all functions $\phi:\mathbb R^d\rightarrow\mathbb R$ which are continuous and bounded on $\mathbb R^d$.  For a given $m\geq0,$ we define $C_m(\mathbb R^d)$ to be set of continuous functions that have at most polynomial growth of order $m$, i.e. the set of functions $\phi\in C(\mathbb R^d)$ such that $$\psi:=\frac{\phi(y)}{1+|y|^m} \in C_b(\mathbb R^d).$$ This space is a Banach space when endowed with the norm 
$$\|\phi\|_m:=\sup_{y\in\mathbb R^d}\frac{|\phi(y)|}{1+|y|^m}.$$
Let $I$ be a compact subset  of $\mathbb R$. A function $\phi$ belongs to $USC_m(I\times\mathbb R^d)$ (or $LSC_m(I\times\mathbb R^d))$ if it has at most polynomial growth of order $m$ in the second variable uniformly with respect to $t\in I$ and is upper (lower) semi-continuous on $I\times\mathbb R^d$.
Whenever the notation $T^-$ appears in the definition of a function space we mean the set of all functions whose restrictions satisfy the respective property when $T^-$ is replaced by any $s<T$, e.g., 
\[
	C_{m}([0,T^-]\times\mathbb R^d)=\{u:[0,T)\times\mathbb R^d\rightarrow \mathbb R: u_{|[0,s]\times\mathbb R^d}\in C_{m}([0,s]\times\mathbb R^d) \text{ for all } s\in[0,T)\}.
\]
Throughout, all equations and inequalities are to be understood in the a.s.\ sense. We adopt the convention that $C$ is a constant that may vary from line to line. 

%%%%%%%%%%%%%%%%%%%%%%%%%%%%%%%%%%%%%%%%%%
%%%%%%%%%%%%%%%%%%%%%%%%%%%%%%%%%%%%%%%%%%
%%%%%%%%%%%%%%%%%%%%%%%%%%%%%%%%%%%%%%%%%%
%%%%%%%%%%%%%%%%%%%%%%%%%%%%%%%%%%%%%%%%%%

\section{Assumptions and main results}\label{main}

For each initial state $(t,y,x)\in[0,T)\times\mathbb R^{d}\times\mathbb R$ we define by
\begin{equation} \label{value-function}
	V(t,y,x):=\inf_{(\xi,\mu)\in\mathcal A(t,x)}E\left[\int_t^T\eta(Y_s^{t,y}) |\xi_s|^2+\theta\gamma(Y_s^{t,y})|\mu_s|^2+\lambda(Y_s^{t,y})|X_s^{\xi,\mu}|^2\,ds\right]
\end{equation}
the {\sl value function} of the control problem (\ref{P1}) subject to the state dynamics 
\begin{equation}
\begin{split}
	dY^{t,y}_s &= b(Y^{t,y}_s) ds + \sigma(Y^{t,y}_s) dW_s, \quad Y^{t,y}_t=y \\
	dX_s^{\xi,\mu} &=- \xi_s\,ds- \mu_s\,dN_s, \quad X_t=x.
\end{split}
\end{equation} 
Here, $\xi=(\xi_s)_{s\in[t,T]}$ describes the {\sl rates} at which the agent trades in the primary market, while $\mu=(\mu_s)_{s\in[t,T]}$ describes the {\sl orders} submitted to the dark pool. The infimum is taken over the set $\mathcal A(t,x)$ of all {\sl admissible controls}, that is, {over all pairs of controls $(\xi,\mu)$ such that $\xi$ is progressively measurable, such that $\mu$ is predictable\footnote{We show later that we restrict ourselves to monotone portfolio processes so we could just as well assume that $\mu$ is bounded.}} and such that the resulting state process 
\begin{equation*} %\label{dynamic-position}
X_s^{\xi,{\mu}}=x-\int_t^s\xi_r\,dr-\int_t^s\mu_r\,dN_r,  \qquad t\leq s\leq T,
\end{equation*}
satisfies the terminal state constraint 
\begin{equation}   \label{liquidation-constraint}
	X_{T}^{\xi,\mu}=0.
\end{equation}
{The expected costs associated with an admissible liquidation strategy $(\xi,\mu)$ are given by 
$$J(t,y,x;\xi,\mu):=\mathbb E\left[\int^T_t c(Y^{t,y}_s, X^{\xi,\mu}_s,\xi_s,\mu_s) \,ds\right],$$
where the running cost function $c(y,x,\xi,\mu)$ is given by 
%
%\eta(Y^{t,y}_s)|\xi_s|^2+\theta\gamma(Y^{t,y}_s)|\mu_s|^2+\lambda(Y^{t,y}_s)|X_s^{\xi,\mu}|^2\,ds\right].$$
%Denote the running costs by
$$c(y,x,\xi,\mu):=\eta(y)|\xi|^2+\theta\gamma(y)|\mu|^2+\lambda(y)|x|^2.$$}
{
\begin{remark}
	We assume that the cost function is quadratic in the controls and the state variable. A generalization to general powers $p>1$ as in \cite{Graewe2018} can be established using similar arguments but renders the notation more cumbersome. 
\end{remark}}
The dynamic programming principle suggests that the value function satisfies the HJB equation
\begin{equation} \label{hjb}
-\partial_t V(t,y,x)-\mathcal L V(t,y,x)-\inf_{\xi,\mu\in\mathbb R}	H(t,y,x,\xi,\mu,V)=0, \quad (t,y,x)\in[0,T)\times\mathbb R^d\times\mathbb R,
\end{equation}
{where
\[ 
	\mathcal L:=\frac{1}{2}\tr(\sigma\sigma^* D_y^2)+\left\langle b,D_y\right\rangle 
\]
denotes the infinitesimal generator of the factor process }and the Hamiltonian $H$ is given by
\[
	H(t,y,x,\xi,\mu,V):=-\xi \partial_xV(t,y,x)+\theta(V(t,y,x-\mu)-V(t,y,x))+c(y,x,\xi,\mu).
\]
The quadratic cost function suggests an ansatz of the form $V(t,y,x)=v(t,y)|x|^2$.
The following result confirms this intuition. Its proof can be found in \cite[Section 2.2]{Graewe2018}.  

\begin{lemma} \label{lemma-hjb}
A nonnegative function $v:[0,T)\times\mathbb R^d\rightarrow[0,\infty)$ is a (sub/super) solution to the PDE
\begin{equation} \label{inflator-pde}
-\partial_tv(t,y)-\mathcal Lv(t,y)-F(y,v(t,y))=0,
\end{equation} 
where 
\begin{equation} \label{nonlinearity}
	F(y,v):= \lambda(y)-\frac{|v|^{2}}{\eta(y)}+\frac{\theta\gamma(y)v}{\gamma(y)+|v|}-\theta v,
\end{equation}
if and only if $v(t,y)|x|^2$ is a (sub/super) solution to the HJB equation~\eqref{hjb}.  In this case the infimum in~\eqref{hjb} is attained at 
\begin{equation} \label{optimal-control-hjb} 
\xi^*(t,y,x)=\frac{v(t,y)}{\eta(y)}x \quad \text{ and } \quad \mu^*(t,y,x)=\frac{v(t,y)}{\gamma(y)+v(t,y)}x
\end{equation}
and
\begin{equation} \label{optimal-hamiltonian}
H(t,y,x,\xi^*(t,y,x),\mu^*(t,y,x),v(\cdot,\cdot)|\cdot|^{2})=F(y,v(t,y))|x|^2.
\end{equation}
\end{lemma}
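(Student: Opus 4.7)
The plan is to substitute the quadratic ansatz $V(t,y,x)=v(t,y)|x|^2$ into \eqref{hjb} and reduce both the Hamiltonian minimization and the PDE itself to a pointwise statement about $v$ alone. A direct computation yields $\partial_xV(t,y,x)=2v(t,y)x$ and, for the Poisson jump increment, $V(t,y,x-\mu)-V(t,y,x)=v(t,y)(\mu^2-2x\mu)$. With the running cost $c(y,x,\xi,\mu)=\eta(y)|\xi|^2+\theta\gamma(y)|\mu|^2+\lambda(y)|x|^2$ read off from \eqref{P1}, the Hamiltonian $H(t,y,x,\xi,\mu,V)$ becomes a strictly convex quadratic in $(\xi,\mu)$ since $\eta,\gamma>0$ and $v\ge 0$, hence admits a unique minimizer determined by the first-order conditions.

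Solving $-2v(t,y)x+2\eta(y)\xi=0$ and $-2\theta v(t,y)x+2\theta(v(t,y)+\gamma(y))\mu=0$ recovers the minimizers displayed in \eqref{optimal-control-hjb}. Substituting them back and using the algebraic identity $\theta v-\frac{\theta\gamma v}{\gamma+v}=\frac{\theta v^2}{\gamma+v}$ gives
\[
\inf_{\xi,\mu\in\mathbb R}H(t,y,x,\xi,\mu,V)=\left(\lambda(y)-\frac{v(t,y)^2}{\eta(y)}-\frac{\theta v(t,y)^2}{\gamma(y)+v(t,y)}\right)|x|^2=F(y,v(t,y))|x|^2,
\]
which also establishes \eqref{optimal-hamiltonian}. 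Because $\mathcal L$ acts only in the $y$-variable we have $\partial_tV=(\partial_tv)|x|^2$ and $\mathcal LV=(\mathcal Lv)|x|^2$, so \eqref{hjb} rewrites as $(-\partial_tv-\mathcal Lv-F(y,v))|x|^2=0$. This identity is equivalent to \eqref{inflator-pde} for $x\neq 0$ and is trivially satisfied at $x=0$, which settles the classical case.

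For the viscosity statements the equivalence is obtained by a test-function transfer. Given a smooth $\phi$ touching $v$ from above (resp.\ below) at $(t_0,y_0)$, the product $\Phi(t,y,x):=\phi(t,y)|x|^2$ touches $V=v|x|^2$ from above (resp.\ below) at every $(t_0,y_0,x_0)$, and the computation above shows that the HJB viscosity inequality at $(t_0,y_0,x_0)$ reduces, after dividing by $|x_0|^2\neq 0$, to the reduced PDE inequality for $v$ at $(t_0,y_0)$. Conversely, for a test function $\Phi$ touching $V$ at $(t_0,y_0,x_0)$ with $x_0\neq 0$, the slice $\tilde\phi(t,y):=\Phi(t,y,x_0)/|x_0|^2$ satisfies $\tilde\phi\ge v$ (resp.\ $\tilde\phi\le v$) with equality at $(t_0,y_0)$, so it is an admissible test function for \eqref{inflator-pde}, and the same algebra transports the inequality in the required direction.

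The main obstacle is the test-function transfer at $x_0=0$, where $V(t_0,y_0,0)=0$ and a generic test function need not admit the product form; here the argument exploits that both sides of the reduced HJB identity vanish, so the viscosity inequality is automatic once it has been verified at every $x_0\neq 0$, and that a test function $\Phi$ touching $V\ge 0$ from above at $x_0=0$ must satisfy $\partial_x\Phi(t_0,y_0,0)=0$ and $\partial_{xx}\Phi(t_0,y_0,0)\ge 2v(t_0,y_0)$. The nonlocal jump term introduces no further difficulty because under the ansatz the increment $V(t,y,x-\mu)-V(t,y,x)$ is an explicit polynomial in $(x,\mu)$ depending only on the pointwise value $v(t,y)$, so after the pointwise optimization the Hamiltonian collapses to the purely local expression $F(y,v(t,y))|x|^2$ and no information about $V$ away from the contact point is required.
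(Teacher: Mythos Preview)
The paper does not supply its own proof of this lemma; it simply refers the reader to \cite[Section~2.2]{Graewe2018}. Your direct argument is therefore not competing with anything in the present paper, and the algebraic core---the minimisation in $(\xi,\mu)$, the identification $\inf_{\xi,\mu}H=F(y,v)|x|^2$, and the reduction of \eqref{hjb} to \eqref{inflator-pde} at every $x\neq 0$---is carried out correctly.

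One point in the viscosity transfer deserves tightening. In the converse direction (``$v$ sub/super $\Rightarrow$ $V$ sub/super'') you take a general test function $\Phi$ for $V$ at $(t_0,y_0,x_0)$ with $x_0\neq 0$ and pass to the slice $\tilde\phi(t,y)=\Phi(t,y,x_0)/|x_0|^2$. This indeed produces an admissible test function for $v$, but the resulting inequality only controls $\partial_t\Phi$ and $\mathcal L_y\Phi$; it says nothing about $\partial_x\Phi$ or $\Phi(t_0,y_0,x_0-\mu)$, which enter the HJB operator through $H$. To close the argument you must also use the first-order contact condition in the $x$-variable, $\partial_x\Phi(t_0,y_0,x_0)=\partial_xV(t_0,y_0,x_0)=2v(t_0,y_0)x_0$, and the global ordering $\Phi\gtrless V$, which give
\[
H(t_0,y_0,x_0,\xi,\mu,\Phi)\ \gtrless\ H(t_0,y_0,x_0,\xi,\mu,V)
\]
and hence $\inf_{\xi,\mu}H(\cdot,\Phi)\gtrless F(y_0,v(t_0,y_0))|x_0|^2$ with the sign matching the sub/super case. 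You invoke exactly these facts at $x_0=0$, but they are equally needed at $x_0\neq 0$; the phrase ``the same algebra transports the inequality'' hides this step. With that addition the proof is complete.
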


%%%%%%%%%%%%%%%%%%%%%%%%%%%%%%%%%%%%%%%%%%

\subsection{Assumptions}

In order to prove the existence of a unique non-negative continuous viscosity solution of polynomial growth to our HJB equation we assume throughout that the factor process 
\begin{equation} \label{sde}
 Y_s^{t,y}=y+\int_t^sb(Y_r^{t,y})\,dr+\int_t^s\sigma(Y_r^{t,y})\,dW_r, \qquad t\leq s\leq T.
\end{equation}
satisfies the following condition. 

\begin{assumption}\label{Ass1}
The coefficients $b:\mathbb R^d\rightarrow \mathbb R^d$ and $\sigma:\mathbb R^d\rightarrow\mathbb R^{d\times \tilde d}$ are Lipschitz continuous.
\end{assumption}

The preceding assumption guarantees that the SDE \eqref{sde} has a unique strong solution $(Y_s^{t,y})_{s\in[t,T]}$ for every initial state $(t,y)\in[0,T]\times\mathbb R^d$ and that the mapping $(s,t,y) \mapsto Y^{t,y}_s$ is a.s.~continuous. We repeatedly use the following well known estimates; cf.~\cite[Corollary 2.5.12]{Krylov1980}. For all $m\geq 0,$ there {exists a constant $C>0$} such that for all $y\in\mathbb R^d, 0\leq t\leq s\leq T,$
\begin{equation}\label{Y-estimate}
\begin{aligned}
\mathbb E\sup_{t\leq s\leq T} |Y^{t,y}_s|^m&\leq C(1+|y|^m).\\
%\mathbb E| Y^{0,x}_t-Y^{0,y}_s|^2&\leq N\left(|x-y|^2+(1+|y|^2)\cdot |s-t|^2\right).
\end{aligned}
\end{equation}

Furthermore, we assume that the cost coefficients are continuous and of polynomial growth and that $\eta$ is twice continuously differentiable and satisfies a mild boundedness condition. 

\begin{assumption}\label{Ass2}
The cost coefficients satisfy the following conditions: 
\begin{itemize}
	\item[(i)] The coefficients $\eta,\gamma,\lambda,1/\eta:\mathbb R^d\rightarrow \mathbb [0,\infty)$ are continuous and of polynomial growth.  
	\item[(ii)] $\eta\in C^2$ and $\|\frac{\mathcal L \eta }{\eta}\|$ is bounded.
\end{itemize}
\end{assumption}
\begin{remark}
The preceding assumption is satisfied if, for instance $Y$ is {a} geometric Brownian motion or an Ornstein-Uhlenbeck (OU) process and
\[
	\eta(y) =1+|y|^2. % \quad  \lambda(y) = C(1+|y|^2)^{\frac{m_2}{2}}, \quad m_1 \geq m_2\vee 0 \textnormal{ or } m_1 \leq m_2<0.
\]
In both cases, condition (2.13) in \cite{Schied2013a} is violated. Our assumptions are also weaker than those in \cite{Graewe2018}. For instance, OU processes do not generate analytic semigroups, they do not satisfy the assumptions therein. 
\end{remark}

%In a portfolio liquidation framework, they are a natural choice for describing fluctuations in markets depths. In such a setting it would be natural to choose $\lambda(Y) = Y^1$ and 
%\[
%	\eta(Y)=
%	\left\{
%	\begin{aligned}
%	&C(1+|Y^2|^2),\quad\quad  \mbox{ if } Y^2<0,\\
%	&C(1+|Y^2|^2)^{-1},\quad  \mbox{ if } Y^2\geq 0.
%	\end{aligned}
%	\right.
%\] 
%\end{remark}

%%%%%%%%%%%%%%%%%%%%%%%%%%%%%%%%%%%%%%%%%%

\subsection{Main results}
Before stating our first main result, we recall the notion of viscosity solutions for parabolic equations that will be used in this paper. The following definition can be found in \cite[Section 8]{Crandall1992}.
\begin{definition}
For semicontinuous functions $v:[0,T)\times\mathbb R^d\rightarrow \mathbb R$ we use the following solution concepts for the parabolic PDE:
\begin{equation} \label{general-pde}
-\partial_t v(t,y)-G(t,y,v(t,y),D_yv(t,y),D^2_yv(t,y))=0,
\end{equation}
where $G:[0,T)\times\mathbb R^d\times\mathbb R\times\mathbb R^d\times\mathbb S^d\rightarrow \mathbb R$ and $\mathbb S^d$ denotes the set of symmetric $d\times d$ matrices.
\begin{itemize}
	\item[(i)] $v \in USC_m([0,T^-]\times\mathbb R^d)$ is a \textit{(strict) viscosity subsolution} if for every $\varphi\in C^{1,2}_{loc}([0,T)\times \mathbb R^d)$ such that $\varphi\geq v$ and $\varphi(t,y)=v(t,y)$ at a point $(t,y)\in[0,T)\times\mathbb R^d$ it holds \[-\partial_t \varphi(t,y)-G(t,y,v(t,y),D_y\varphi(t,y),D^2_y\varphi(t,y))(<)\leq  0.\]
	\item[(ii)] $v\in LSC_m([0,T^-]\times\mathbb R^d)$ is a \textit{(strict) viscosity supersolution} if for every $\varphi\in C^{1,2}_{loc}([0,T)\times \mathbb R^d)$ such that $\varphi\leq v$ and $\varphi(t,y)=v(t,y)$ at a point $(t,y)\in[0,T)\times\mathbb R^d$ it holds \[-\partial_t \varphi(t,y)-G(t,y,v(t,y),D_y\varphi(t,y),D^2_y\varphi(t,y))(>)\geq  0.\]
	\item[(iii)] $v$ is a \textit{viscosity solution} if $v$ is both viscosity sub- and supersolution.
\end{itemize}
\end{definition}

We are now ready to state the main result of this paper. Its proof is given in Section 3 below. 

\begin{theorem} \label{thm-existence}  \leavevmode Under Assumptions~\ref{Ass1}, \ref{Ass2},  the singular terminal value problem
\begin{equation} \label{pde-v}
	\left\{\begin{aligned}	&{-\partial_t v}(t,y)-\mathcal L v(t,y)- F(y,v(t,y))=0,    & (t,y)\in[0,T)\times\mathbb R^d,&\\
&\lim_{t\rightarrow T}v(t,y)=+\infty  & \text{locally uniformly on } \mathbb R^d,&
\end{aligned}\right.
\end{equation}
with the nonlinearity $F$ given in~\eqref{nonlinearity} admits a unique nonnegative viscosity solution in \[
	C_{m}([0,T^-]\times\mathbb R^d)
\] 
for some $m\geq 0.$
\end{theorem}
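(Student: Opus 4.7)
My plan has three layers: uniqueness by the comparison principle, existence by Perron's method between explicit barriers, and a separate argument to pin down the singular terminal behaviour.

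For uniqueness, given two nonnegative viscosity solutions $v_1,v_2\in C_m([0,T^-]\times\mathbb R^d)$ satisfying $\lim_{t\to T}v_i(t,y)=+\infty$ locally uniformly, I would treat each one alternately as a subsolution and the other as a supersolution and invoke the semi-continuous comparison principle announced in the introduction to get $v_1\le v_2\le v_1$. This same comparison principle is what makes Perron's method work below.

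For existence I would first build an explicit upper barrier. A natural candidate is
\[
\overline U(t,y)=\frac{A\,\eta(y)}{T-t}+B\eta(y)+C,
\]
with $A\ge 1$ and $B,C>0$ chosen large. The key point is that the strongly dissipative term $-v^{2}/\eta(y)$ in $F$ exactly matches $\partial_t\overline U\sim \eta(y)/(T-t)^{2}$; the remaining contributions $\mathcal L\overline U$ and $\lambda(y)$ are absorbed using the boundedness of $\mathcal L\eta/\eta$ and $\lambda/\eta$ from Assumption \ref{Ass2}(ii), while $\theta\gamma(y)\overline U/(\gamma(y)+\overline U)-\theta\overline U\le 0$ is harmless. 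Since $\eta$ is of polynomial growth, $\overline U$ lies in $C_{m'}([0,T^-]\times\mathbb R^d)$ for some $m'$. A lower barrier is $\underline U\equiv 0$, which is a subsolution because $F(y,0)=\lambda(y)\ge 0$. Now I would apply the classical Perron construction: set
\[
v(t,y):=\sup\bigl\{w(t,y):w \text{ is a viscosity subsolution with } 0\le w\le \overline U\bigr\}.
\]
Standard envelope arguments (Crandall--Ishii--Lions) show that the usc envelope $v^{*}$ is a subsolution and the lsc envelope $v_{*}$ is a supersolution, both with polynomial growth uniformly on $[0,T-\delta]\times\mathbb R^d$ for every $\delta>0$.

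To conclude by comparison I still need matching singular terminal behaviour for $v^{*}$ and $v_{*}$. For the upper side, $v\le \overline U$ already gives the right growth bound as $t\to T$. For the lower side I would construct, for each $N>0$, a smooth subsolution $\underline U_N\le \overline U$ supported on a strip $[T-\delta_N,T)$ with $\underline U_N(t,y)\to N$ locally uniformly in $y$ as $t\to T$; an ansatz of the form $\underline U_N(t,y)=N-\rho(T-t)(1+|y|^{k})$ with $\rho$ small and $k$ matched to $1/\eta$ works, again using the dominant $-v^{2}/\eta$ term to check the subsolution inequality. Gluing $\underline U_N\vee 0$ into the Perron class yields $v\ge\underline U_N$, hence $\liminf_{t\to T}v_{*}(t,y)\ge N$ locally uniformly, and letting $N\to\infty$ gives the singular terminal condition for $v_{*}$ (and a fortiori for $v^{*}$). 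The comparison principle then forces $v^{*}\le v_{*}$, so $v^{*}=v_{*}=v$ is the desired continuous viscosity solution in $C_m([0,T^-]\times\mathbb R^d)$.

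The main obstacle I anticipate is the barrier construction under Assumption \ref{Ass2} alone: because $\lambda$ and $1/\eta$ are only polynomially bounded rather than $L^{\infty}$, the barriers must be tailored to the ratios $\mathcal L\eta/\eta$ and $\lambda/\eta$ being bounded, and $\eta$ itself must be allowed to vanish or explode. A secondary but genuine difficulty is justifying the Perron step in the presence of the singular terminal datum; this is precisely where the semi-continuous comparison principle (rather than a continuous one) is indispensable, since at the moment Perron is invoked $v^{*}$ and $v_{*}$ are not yet known to coincide.
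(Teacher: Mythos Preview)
Your overall architecture---Perron between barriers, then a semi-continuous comparison principle to collapse $v^*$ and $v_*$---is exactly what the paper does. The genuine gap is in your lower barrier and, consequently, in what the comparison principle can actually deliver.

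The paper's comparison principle (Proposition~\ref{comparison}) does \emph{not} compare arbitrary sub- and supersolutions that merely blow up at $T$. It requires two quantitative conditions: an asymptotic matching
\[
\limsup_{t\to T}\frac{\underline u(t,y)(T-t)-\eta(y)}{1+|y|^{m}}\le 0\le\liminf_{t\to T}\frac{\overline u(t,y)(T-t)-\eta(y)}{1+|y|^{m}}
\quad\text{uniformly in }y,
\]
and a pointwise lower bound $(T-t)\underline u,(T-t)\overline u\ge \tfrac12\eta(y)$ on a strip $[T-\delta,T)$. These are what make the linearised coefficient $l(t,y)\le -1/(T-t)$ (via $\partial_u F(y,\eta(y)/2(T-t))=-1/(T-t)$), which is the engine of the whole argument. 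Taking $\underline U\equiv 0$ gives you nothing of the sort for $v_*$: after Perron you only know $0\le v_*\le v^*\le\overline U$, and your auxiliary subsolutions $\underline U_N=N-\rho(T-t)(1+|y|^k)$ at best yield $\liminf_{t\to T}v_*\ge N$ locally uniformly, not $(T-t)v_*\ge\tfrac12\eta$ uniformly in $y$. (Incidentally, that ansatz is already problematic: the term $-\underline U_N^{2}/\eta(y)\approx -N^{2}/\eta(y)$ has polynomial growth in $y$ via $1/\eta$, so for small $\rho$ and large $N$ the subsolution inequality fails at infinity.) Without the quantitative lower asymptotics, Proposition~\ref{comparison} cannot be invoked, and you are stuck with $v_*\le v^*$ only.

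The paper resolves this by taking a \emph{singular} classical subsolution
\[
\check v(t,y)=\frac{\eta(y)\bigl(1-\|\mathcal L\eta/\eta\|(T-t)\bigr)}{e^{\theta(T-t)}(T-t)}
\]
on $[T-\delta,T)$ with $\delta=1/\|\mathcal L\eta/\eta\|\wedge T$, paired with a supersolution $\hat v$ of the same leading order $\eta(y)/(T-t)$. Both satisfy $(T-t)\check v,(T-t)\hat v=\eta(y)+\eta(y)O(T-t)$ uniformly, so Perron between $\check v$ and $\hat v$ forces $v_*$ and $v^*$ to inherit exactly the hypotheses of the comparison principle, which then yields $v^*\le v_*$ on the strip. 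The extension to $[0,T-\delta_0]$ and the uniqueness on the whole domain are then handled by the already-known comparison principle for \emph{continuous} viscosity solutions from \cite{Graewe2018,Alvarez1996}; the semi-continuous principle is used only for the Perron step near $T$. Your uniqueness sketch has the same issue: two abstract solutions with $v_i\to\infty$ locally uniformly need not satisfy the asymptotic hypotheses of Proposition~\ref{comparison}, so you cannot apply it directly to them.
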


The next result states that both the value function and the optimal controls are given in terms of the unique viscosity solution to the HJB equation.  The particular form of the feedback has been established in the literature before. What the proposition shows is that having a continuous viscosity solution to the HJB equation is enough to carry out the verification argument. 

\begin{proposition} \label{thm-verifcation}
Under Assumptions~\ref{Ass1} ,\ref{Ass2}, let $v$ be the unique nonnegative viscosity solution to the singular terminal value problem \eqref{pde-v}.  Then, the value function \eqref{value-function} is given by $V(t,y,x)=v(t,y)|x|^2$, and the optimal control $(\xi^*,\mu^*)$ is given in feedback form by
\begin{equation} \label{optimal-control}
		\xi_s^*= \frac{v(s,Y_s^{t,y})}{\eta(Y_s^{t,y})}X_s^* \quad \text{ and } \quad \mu_s^*=\frac{v(s,Y_s^{t,y})}{\gamma(Y_s^{t,y})+v(s,Y_s^{t,y})}X_{s-}^*.
\end{equation}
In particular, the resulting optimal portfolio process $(X^*_s)_{s\in[t,T]}$ is given by
\begin{equation} \label{optimal-position-process}	
X_s^*=x\exp\left(-\int_t^s\frac{v(r,Y_r^{t,y})}{\eta(Y_r^{t,y})}\,dr\right)\prod_{t<r\leq s}^{\Delta N_r\neq 0}\left(1-\frac{v(t,Y_r^{t,y})}{\gamma(Y_r^{t,y})+v(t,Y_r^{t,y})}\right).
\end{equation}
\end{proposition}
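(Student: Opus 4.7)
The plan is a verification argument tailored to the fact that $v$ is only known to be continuous. By Lemma~\ref{lemma-hjb}, plugging the ansatz $V(t,y,x)=v(t,y)|x|^2$ into the HJB equation reduces matters to the scalar PDE \eqref{pde-v}, and the pointwise Hamiltonian minimizer is the feedback $(\xi^*,\mu^*)$ from \eqref{optimal-control}. It therefore suffices to show (i) $(\xi^*,\mu^*)$ is admissible, (ii) $V(t,y,x)\ge v(t,y)|x|^2$ for every admissible pair, and (iii) the feedback achieves this lower bound. Because $v$ is not known to be $C^{1,2}$, I would not apply It\^o's formula to $v$ directly. Instead, I would truncate the singularity: on each sub-interval $[t,T-\varepsilon]$ the terminal datum $v(T-\varepsilon,\cdot)$ is finite and of polynomial growth, and a non-linear Feynman-Kac representation (exploiting continuity of $v$ and Assumption~\ref{Ass1}) delivers a BSDE representation on $[t,T-\varepsilon]$ of the process $\bar v^\varepsilon_s := v(s,Y^{t,y}_s)$ with driver $F$ and terminal value $v(T-\varepsilon,Y^{t,y}_{T-\varepsilon})$. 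This BSDE plays the role of an It\^o expansion for $v$.

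Given this BSDE, apply It\^o's formula to $\bar v^\varepsilon_s\,|X^{\xi,\mu}_s|^2$, using the BSDE dynamics for the $v$-factor and the jump SDE for $X^{\xi,\mu}$ with the Poisson compensator $\theta\,ds$. After localization, a direct computation rewrites the drift as
\begin{equation*}
H\bigl(s,Y^{t,y}_s,X^{\xi,\mu}_s,\xi_s,\mu_s,v(\cdot,\cdot)|\cdot|^2\bigr) - F\bigl(Y^{t,y}_s,v(s,Y^{t,y}_s)\bigr)|X^{\xi,\mu}_s|^2,
\end{equation*}
which by Lemma~\ref{lemma-hjb} is nonnegative and vanishes exactly at $(\xi_s,\mu_s)=(\xi^*_s,\mu^*_s)$. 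Taking expectations yields
\begin{equation*}
v(t,y)|x|^2 \le E\bigl[v(T-\varepsilon,Y^{t,y}_{T-\varepsilon})\,|X^{\xi,\mu}_{T-\varepsilon}|^2\bigr] + E\!\int_t^{T-\varepsilon}\!\!\bigl(\eta(Y_s)|\xi_s|^2+\theta\gamma(Y_s)|\mu_s|^2+\lambda(Y_s)|X^{\xi,\mu}_s|^2\bigr)ds,
\end{equation*}
with equality when $(\xi,\mu)=(\xi^*,\mu^*)$.

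To verify admissibility of $(\xi^*,\mu^*)$, use \eqref{optimal-position-process}: each jump factor has modulus at most one, so
\begin{equation*}
|X^*_s|\le |x|\exp\!\Bigl(-\!\int_t^s\!\! v(r,Y^{t,y}_r)/\eta(Y^{t,y}_r)\,dr\Bigr).
\end{equation*}
Combined with the singular boundary condition $v(T,\cdot)=+\infty$ locally uniformly, the continuity of $s\mapsto Y^{t,y}_s$, and the local positivity of $1/\eta$, this forces $X^*_{T-}=0$ pathwise. Polynomial growth of $v$ away from $T$ together with the exponential decay of $X^*$ then gives the required $L^4$-integrability of $\xi^*$ and integrability of $\mu^*$.

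The main obstacle is passing $\varepsilon\to 0$ and showing that the boundary term $E[v(T-\varepsilon,Y^{t,y}_{T-\varepsilon})|X_{T-\varepsilon}|^2]$ vanishes both for the candidate and for any admissible control. For this I would extract from the existence proof in Section~\ref{existence} an a priori upper bound of the form $v(t,y)\le C(y)/(T-t)$, natural for minimal supersolutions of singular Riccati-type PDEs. For the feedback, the super-polynomial decay of $X^*$ near $T$ dominates this $1/(T-t)$ blow-up, so the product vanishes by dominated convergence. For a generic admissible pair, finiteness of the cost combined with $X_{T-}=0$ forces $E[|X_{T-\varepsilon}|^2]=o(T-\varepsilon)$ along a subsequence, which together with the same upper bound on $v$ again yields vanishing of the boundary term. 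Combining the two limits closes the verification and identifies $V(t,y,x)=v(t,y)|x|^2$ with optimizer $(\xi^*,\mu^*)$.
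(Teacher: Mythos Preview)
Your overall architecture matches the paper's: obtain a BSDE representation of $s\mapsto v(s,Y^{t,y}_s)$ on $[t,T-\varepsilon]$, apply It\^o to the product with $|X_s|^2$, use Lemma~\ref{lemma-hjb} for the Hamiltonian, and pass to the limit. But two of the steps you sketch do not close as written.

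For admissibility, knowing only that $v(T,\cdot)=+\infty$ locally uniformly gives $\int_t^T v/\eta\,dr=+\infty$ and hence $X^*_{T-}=0$, but it gives no \emph{rate}, so you cannot deduce $\xi^*\in L^4$. The paper uses the explicit subsolution $\check v$ of Proposition~\ref{local-estimate} to obtain $v(r,y)/\eta(y)\geq (1-\|\mathcal L\eta/\eta\|(T-r))/(e^{\theta(T-r)}(T-r))$ on $[T-\delta,T)$, whence $|X^*_s|\leq C|x|(T-s)$. This is a \emph{linear} rate (not super-polynomial), and it is exactly what cancels the $1/(T-s)$ blow-up of $v/\eta$ to keep $\xi^*$ bounded near $T$.

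For the vanishing boundary term under a generic admissible control, your claim that finiteness of the cost forces $E[|X_{T-\varepsilon}|^2]=o(T-\varepsilon)$ along a subsequence does not suffice: the upper bound is $v(T-\varepsilon,y)\leq C\eta(y)/(T-\varepsilon)$ with $\eta(Y^{t,y}_{T-\varepsilon})$ unbounded and random, so you need joint control of $\eta(Y)$ and $|X|^2$, not a separate moment bound on $|X|^2$. The paper first reduces to \emph{monotone} controls $\bar{\mathcal A}(t,x)$ (a point you omit; monotonicity also gives $|X_s|,|\mu_s|\leq |x|$, which is what makes the stochastic integrals in the It\^o expansion true martingales), then invokes the pointwise estimate $|X_s^{\xi,\mu}|^2\leq C(T-s)\,E[\int_s^T|\xi_r|^2\,dr\mid\mathcal F_s]$ from \cite{Graewe2018}, and finally splits $\eta(Y_s)$ from $\int_s^T|\xi_r|^2$ by Cauchy--Schwarz using $\xi\in L^4$ and \eqref{Y-estimate}. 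One further subtlety: the nonlinear Feynman--Kac step is delicate because $F$ is quadratic in $u$; the paper first proves a global bound $v\leq\tilde C\eta$, linearizes by setting $f(t,y):=F(y,v(t,y))\in C_m$, solves the resulting Lipschitz BSDE, and identifies the solution with $v$ via a comparison argument.
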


%%%%%%%%%%%%%%%%%%%%%%%%%%%%%%%%%%%%%%%%%%

%\subsection{Examples} \label{numerical}

Let us close this section with a model of optimal portfolio liquidation where market impact is driven by an Ornstein-Uhlenbeck process while market risk is driven by a geometric Brownian motion. Specifically, let $Y=(Y^1, Y^2)$ be the diffusion process given by 
\[
	dY^1_t = - Y^1_t dt + dW^1_t \quad \mbox{and} \quad \frac{dY^2_t}{Y^2_t} = \sigma dW^2_t,
\]
where $W^1$ and $W^2$ are two (possibly correlated) Brownian motions, and let 
\[
	\eta(Y)=
	\left\{
	\begin{aligned}
	&1+|Y^1|^2,\quad  ~ \mbox{ if } Y^1<0,\\
	&\frac{1}{1+|Y^1|^2},\quad  \mbox{ if } Y^1\geq 0,
	\end{aligned}
	\right.
	\quad \gamma(Y) = 1, \quad \mbox{and} \quad 
	\lambda(Y) = \sigma^2 |Y^2|^2.
\]
The process $Y^1$ specifies a liquidity indicator that fluctuates around a stationary level (normalized to zero) with the market impact increasing when below average liquidity is available and decreasing when above average liquidity is available. Instantaneous market risk, on the other hand is captured by the volatility of the portfolio value assuming that asset prices follow a geometric Brownian motion. For the above choice of model parameters all assumptions on the cost and diffusion coefficients are satisfied. Hence, there exists a unique optimal liquidation strategy. 

\vspace{2mm}

\begin{remark}
To the best of our knowledge, numerical methods for simulating solutions to general PDEs with singular terminal values are still to be developed. At least two problems arise when simulating solutions to HJB equations with singular terminal state constraint. The most obvious problem is the singular terminal condition. This problem can potentially be overcome by noting that the function
\[
	w(t,y):=(T-t)v(t,y), \quad (t,y) \in [0,T) \times \mathbb{R}
\]
satisfies
%\[
%	w(t,y):=(T-t)v(t,y) \leq C \eta(y), \quad (t,y) \in [0,T) \times \mathbb{R}
%\]       
%for some constant $C >0$. In particular, $w(t,y)\to 0$ if $|y| \to \infty$. Hence, we can impose zero boundary conditions when using finite difference methods for simulating the solution $w$ to 
the following PDE with finite terminal value, yet singular driver (see \cite{Graewe2018,Graewe2019} and Section 3 for details)
  \begin{equation*} 
	\left\{\begin{aligned}	&{-\partial_t w}(t,y)-\mathcal L w(t,y)-\frac{w(t,y)}{T-t}-(T-t)F(y,\frac{w(t,y)}{T-t})=0,    & (t,y)\in[0,T)\times\mathbb R^d,&\\
&\lim_{t\rightarrow T}w(t,y)=\eta(y) & \text{on } \mathbb R^d&
\end{aligned}\right. .
\end{equation*}
{The knowledge of a unique classical solution to the transformed problem opens up the possibility to apply higher-order numerical schemes and obtain accurate solutions in acceptable computing time. One possibility could be to study a one-to-one mapping of the unbounded control set to a compact set combined with a discretisation of the control, similar to the idea applied to an optimal investment problem in \cite{Reisinger2016}; an alternative approach based on monotonicity arguments is outlined in \cite{Graewe2019}.} The second problem is to fix appropriate boundary conditions (in space) for the numerical simulations; a similar problem arises if the binding state constraint is replaced by a finite penalty term. The analysis in Section 3 shows that for the benchmark case of a risk neutral investor $(\sigma = 0)$, 
\[
	w(t,y) \leq C \eta(y), \quad (t,y) \in [0,T) \times \mathbb{R}
\]       
for some $C>0$ from which we deduce zero boundary conditions if $\eta(y) \to 0$ for $|y| \to \infty$. In general we can not expect the above inequality to be an equality, though, not even asymptotically when $|y| \to \infty$. If we choose $\sigma = 0$ and the dynamics 
\[
	dY_t = -\tanh(Y_t - Y^3_t)dt + dW_t
\] 
for the liquidity index, then the index is mean-reverting to the levels $\pm 1$, the ``regimes of average liquidity''. Choosing $\eta(y) = \frac{1}{1+y^2}$ all our assumptions on the model parameters are satisfied. In this case we may regard the interval $(-1,+1)$ as the low and the set $[-1,1]^c$ as the high liquidity regime. Since $w(t,y) \to 0$ as $|y| \to \infty$, {the boundary problem can be dealt with.}
%
%this model should {\color{blue}in principle} be amenable to a numerical analysis using finite difference schemes.   
\end{remark}

\section{Solution and verification}
 
\subsection{Existence of solutions}\label{existence} 
In this section, we prove Theorem \ref{thm-existence}. In a first step, we establish a comparison principle for semicontinuous viscosity solutions to (\ref{pde-v}). In view of the singular terminal state constraint we can not follow the usual approach of showing that if a l.s.c. supersolution dominates an u.s.c. subsolution at the boundary, then it also dominates the subsolution on the entire domain. Instead, we prove that if some form of asymptotic dominance holds at the terminal time, then dominance holds near the terminal time. 

In a second step, we construct smooth sub- and supersolutions  to \eqref{pde-v} that satisfy the required asymptotic dominance condition. Subsequently, we apply Perron's method to establish an u.s.c. subsolution and a l.s.c. supersolution that are bounded from above/below by the smooth solutions. From this, we infer that the semi-continuous solutions can be applied to the comparison principle,  which then implies the existence of the desired continuous viscosity solution. 

%%%%%%%%%%%%%%%%%%%%%%%%%%%%%%%%%%%%%%%%%%

\subsubsection{Comparison principle} 

Throughout this section, we fix $\delta\in(0,T]$ and for some $m\geq 0,$ let $\overline u \in LSC_m([T-\delta,T^-]\times\mathbb R^d)$ and $\underline u \in USC_m([T-\delta,T^-]\times\mathbb R^d)$ be a viscosity super- and a viscosity subsolution to \eqref{pde-v}. 

\begin{proposition} \label{comparison}
Under Assumptions~\ref{Ass1}, \ref{Ass2}, if, uniformly on $\mathbb R^d$,  
\begin{equation}\label{asympotic}
\limsup\limits_{t \rightarrow T}\frac{\underline u(t,y)(T-t)-\eta(y)}{1+|y|^{m}}\leq 0\leq\liminf\limits_{t\rightarrow T}\frac{\overline u(t,y)(T-t)-\eta(y)}{1+|y|^{m}},
\end{equation}
and 
\begin{equation}\label{interval}
\underline u(t,y)(T-t), \overline u(t,y)(T-t)\geq  \frac{1}{2}\eta(y),\quad t\in[T-\delta, T),
\end{equation}
then $$\underline u \leq \overline u\quad \text{on} \quad [T-\delta,T)\times\mathbb R^d.$$
\end{proposition}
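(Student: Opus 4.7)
My plan is to argue by contradiction using the Crandall-Ishii-Lions doubling-of-variables machinery for semicontinuous viscosity solutions, tailored to the singular terminal data and the polynomial growth of $\underline u,\overline u$. Suppose the conclusion fails and set $M := \sup_{[T-\delta,T)\times\mathbb R^d}(\underline u-\overline u) > 0$. For $\alpha$ large and $\beta,\kappa$ small, I introduce on $[T-\delta,T)^2\times\mathbb R^{2d}$ the penalized functional
\[
\Phi_{\alpha,\beta,\kappa}(t,y,s,z) := \underline u(t,y) - \overline u(s,z) - \tfrac{\alpha}{2}\bigl(|y-z|^2+(t-s)^2\bigr) - \beta\Bigl(\tfrac{1}{T-t}+\tfrac{1}{T-s}\Bigr) - \kappa\bigl(1+|y|^{2p}+|z|^{2p}\bigr),
\]
with $p>m$ fixed. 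The target is to show that $\Phi_{\alpha,\beta,\kappa}$ attains a strictly positive supremum at some interior point $(\hat t,\hat y,\hat s,\hat z)$ of $[T-\delta,T)^2\times\mathbb R^{2d}$, and then contradict the sub-/supersolution inequalities there.

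The spatial polynomial penalty delivers coercivity in $(y,z)$ against the $USC_m/LSC_m$ growth, forcing the supremum to be attained at a finite $(\hat y,\hat z)$. The asymptotic condition (\ref{asympotic}) gives $(\underline u-\overline u)(t,y)(T-t) \leq 2\varepsilon(1+|y|^m)$ uniformly in $y$ for $t$ near $T$; combined with continuity of $\eta$ and $|\hat y-\hat z|, |\hat t-\hat s|\to 0$ (from the $\alpha$-penalty on bounded sets fixed by the $\kappa$-penalty), this drives $\Phi_{\alpha,\beta,\kappa}\to -\infty$ along any sequence with $\hat t,\hat s\to T$, so the maximizer lies strictly inside $[T-\delta,T)^2$. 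For $\beta,\kappa$ small, $\sup\Phi_{\alpha,\beta,\kappa} \geq M/2 > 0$ by evaluating at a near-maximizer $(t_0,y_0,t_0,y_0)$ of $\underline u-\overline u$, so $\underline u(\hat t,\hat y)-\overline u(\hat s,\hat z)$ is bounded below by a positive constant independent of $\alpha$. By the Crandall-Ishii theorem of sums (Theorem 8.3 of \cite{Crandall1992}), I obtain parabolic superjets $(a_1,p_1,X)\in\overline{\mathcal P}^{2,+}\underline u(\hat t,\hat y)$ and subjets $(a_2,p_2,Y)\in\overline{\mathcal P}^{2,-}\overline u(\hat s,\hat z)$ whose temporal components differ by $\beta[(T-\hat t)^{-2}+(T-\hat s)^{-2}]$, whose gradients differ by $O(\kappa(1+|\hat y|^{2p-1}))$, and whose Hessians $X,Y$ satisfy the standard matrix inequality in $\alpha$. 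Plugging into the sub-/supersolution inequalities and subtracting, the $\mathcal Lv$-terms cancel modulo $o_\alpha(1)$ and remainders of order $O(\kappa)$ controlled via boundedness of $\mathcal L\eta/\eta$ and $\lambda/\eta$ from Assumption~\ref{Ass2}(ii), leaving
\[
F(\hat y,\underline u(\hat t,\hat y)) - F(\hat z,\overline u(\hat s,\hat z)) \leq \tfrac{2\beta}{(T-\hat t)^2} + \tfrac{2\beta}{(T-\hat s)^2} + o_\alpha(1) + O(\kappa).
\]

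Finally, condition (\ref{interval}) forces $\underline u(\hat t,\hat y),\overline u(\hat s,\hat z) \geq \eta(\cdot)/[2(T-\cdot)]$, both of order $1/(T-\hat t)$ since $\eta(\hat y)$ is bounded below on the bounded set fixed by the $\kappa$-penalty (using $1/\eta$ of polynomial growth). The dominant $-v^2/\eta$ term in $F$ is then of order $(T-\hat t)^{-2}$ and strictly decreasing in $v$, yielding
\[
F(\hat z,\overline u)-F(\hat y,\underline u) \;\geq\; \tfrac{\underline u+\overline u}{\eta(\hat y)}\bigl(\underline u-\overline u\bigr)\;-\;\text{l.o.t.},
\]
a lower bound of order $(T-\hat t)^{-2}$ times a positive constant, which contradicts the previous inequality once $\beta$ is chosen sufficiently small (and $\alpha$ large, $\kappa$ small). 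The main obstacle is precisely this final balance: the good $-v^2/\eta$ term in $F$, of order $(T-\hat t)^{-2}$ thanks to (\ref{interval}), must beat the bad $\beta/(T-\hat t)^2$ contribution from the singular penalty's time derivative, uniformly in $\alpha,\kappa$ before sending $\beta,\kappa\to 0$. It is exactly the lower bound (\ref{interval}) together with the boundedness assumptions in Assumption~\ref{Ass2}(ii) that make this work despite the merely semicontinuous regularity of the candidates.
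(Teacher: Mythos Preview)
Your approach is genuinely different from the paper's and, as written, has a real gap in the closing step.

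\textbf{What the paper does instead.} The paper never doubles variables on the nonlinear equation. It first invokes a linearization lemma: the difference $w:=\underline u-\overline u$ is a viscosity subsolution of the \emph{linear} equation $-\partial_t w-\mathcal L w-l(t,y)w=0$ with $l=(F(y,\underline u)-F(y,\overline u))/(\underline u-\overline u)$. Condition~\eqref{interval} then yields the pointwise bound $l(t,y)\le -\tfrac{1}{T-t}$. A smooth strict supersolution $\chi(t,y)=e^{K_n(T-t)}(1+|y|^2)^{n/2}/(T-t)$ of $-\partial_t\chi-\mathcal L\chi+\chi/(T-t)>0$ is constructed, and one tests $w$ against $\alpha\chi$ at the (finite) maximum of $w-\alpha\chi$. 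The strict inequality in $\chi$ plus $l\le -1/(T-t)$ closes the argument in one line: $0\ge -l\,\Phi_\alpha$, hence $\Phi_\alpha\le 0$. No Ishii lemma, no $\hat y\neq\hat z$ error terms, no competing singular orders.

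\textbf{The gap in your argument.} Your contradiction hinges on the claim that
\[
F(\hat z,\overline u)-F(\hat y,\underline u)\ \ge\ \frac{\underline u+\overline u}{\eta(\hat y)}\,(\underline u-\overline u)
\]
is ``of order $(T-\hat t)^{-2}$ times a positive constant'' independent of $\beta$. This is not true. Condition~\eqref{interval} gives $(\underline u+\overline u)/\eta\gtrsim (T-\hat t)^{-1}$, but condition~\eqref{asympotic} forces $(\underline u-\overline u)(T-t)\to 0$, i.e.\ $\underline u-\overline u=o((T-\hat t)^{-1})$. Hence your good term is $o((T-\hat t)^{-2})$, not $\Theta((T-\hat t)^{-2})$. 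The only way to reach order $(T-\hat t)^{-2}$ is through the part $\underline u-\overline u\ge \beta/(T-\hat t)+\beta/(T-\hat s)$ coming from positivity of $\Phi$; but that produces at best $\tfrac{\beta}{2}\bigl(\tfrac{1}{T-\hat t}+\tfrac{1}{T-\hat s}\bigr)^{2}$, which does \emph{not} dominate the bad term $\beta\bigl(\tfrac{1}{(T-\hat t)^2}+\tfrac{1}{(T-\hat s)^2}\bigr)$ --- the difference is $-\tfrac{\beta}{2}\bigl(\tfrac{1}{T-\hat t}-\tfrac{1}{T-\hat s}\bigr)^{2}\le 0$. So ``choose $\beta$ small'' does not save you: the leading-order $\beta$-contributions cancel with the wrong sign, and you are left needing the subleading contribution $\tfrac{M_0}{T-\hat t}$ (coming from the fixed positive lower bound on $\underline u-\overline u$) to beat all remaining errors. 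That requires, in particular, uniform control of $T-\hat t$ from below as $\alpha\to\infty$ and careful tracking of the freezing error $F(\hat y,\cdot)-F(\hat z,\cdot)$, which is multiplied by $\overline u^2\sim (T-\hat s)^{-2}$. None of this is in your sketch, and the stated order-balance is incorrect.

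The paper's route sidesteps all of this: by linearizing first, the singular behaviour is encoded once in the scalar bound $l\le -1/(T-t)$, and a single smooth barrier $\chi$ with the matching $1/(T-t)$ weight suffices.
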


Assumptions \eqref{asympotic}, \eqref{interval} are uncommon in the viscosity literature. However, we shall only use the comparison result to establish the existence of a solution, not the uniqueness. As a result, we only need to guarantee that the  semi-continuous solutions established through Perron's method satisfy both assumptions. 

The proof of the comparison principle is based on three auxiliary results. The first lemma is taken from \cite[Lemma A.2]{Graewe2018}. It is a modification of \cite[Lemma 3.7]{Barles1997}. 

\begin{lemma}
The difference $w:=\underline{u}-\overline{u}\in USC_m([T-\delta,T^-]\times\mathbb R^d)$ is a viscosity subsolution to
\begin{equation}\label{diff-pde} 
	-\partial_t w(t,y)-\mathcal L w(t,y)-l(t,y)w(t,y)=0,    \quad (t,y)\in [T-\delta,T)\times\mathbb R^d,
\end{equation}
where
$$l(t,y):=\frac{F(y,\underline{u}(t,y))-F(y,\overline{u}(t,y))}{\underline{u}(t,y)-\overline{u}(t,y)}\mathbb I_{\underline{u}(t,y)\neq\overline{u}(t,y)}.$$
\end{lemma}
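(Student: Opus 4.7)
The plan is to verify the subsolution property of $w$ by the standard doubling-of-variables technique. The regularity and growth assertions are immediate: $w = \underline u - \overline u$ is USC as the sum of an USC function and the negative of an LSC function, and inherits polynomial growth of order $m$ from $\underline u,\overline u$.

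For the PDE inequality I would fix a test point $(\bar t,\bar y) \in [T-\delta,T)\times\mathbb R^d$ and $\varphi \in C^{1,2}_{loc}$ with $\varphi\geq w$ and $\varphi(\bar t,\bar y)=w(\bar t,\bar y)$, strictly so after adding the standard quartic bump. Since $w$ is not itself one of the two viscosity solutions, $\varphi$ cannot be plugged directly into a single sub/supersolution definition; instead I would introduce the doubled function
$$\Phi_\epsilon(t,y,s,z) := \underline u(t,y) - \overline u(s,z) - \varphi(t,y) - \frac{(t-s)^2 + |y-z|^2}{2\epsilon^2}$$
on a compact neighbourhood of $(\bar t,\bar y,\bar t,\bar y)$ and pick a maximiser $(t_\epsilon,y_\epsilon,s_\epsilon,z_\epsilon)$. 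The standard Lions-type argument (\cite[Lemma~3.1]{Crandall1992}) then yields $(t_\epsilon,y_\epsilon),(s_\epsilon,z_\epsilon)\to(\bar t,\bar y)$, $\epsilon^{-2}((t_\epsilon-s_\epsilon)^2+|y_\epsilon-z_\epsilon|^2)\to 0$, and crucially the value convergences $\underline u(t_\epsilon,y_\epsilon) \to \underline u(\bar t,\bar y)$ and $\overline u(s_\epsilon,z_\epsilon) \to \overline u(\bar t,\bar y)$.

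The parabolic Crandall-Ishii lemma (\cite[Theorem~8.3]{Crandall1992}) then furnishes parabolic super- and sub-jets of $\underline u$ and $\overline u$ at $(t_\epsilon,y_\epsilon)$ and $(s_\epsilon,z_\epsilon)$ respectively, whose time components differ by $\partial_t\varphi(t_\epsilon,y_\epsilon)$ and whose Hessian components $X_\epsilon,Y_\epsilon \in \mathbb S^d$ satisfy $X_\epsilon - D^2_y\varphi(t_\epsilon,y_\epsilon) \leq Y_\epsilon$ with the standard norm control in $\epsilon^{-2}$. Inserting these into the sub/supersolution inequalities for $\underline u$ and $\overline u$, subtracting, and invoking Lipschitz continuity of $b,\sigma$ (Assumption~\ref{Ass1}) to absorb the penalty-driven drift and trace terms — both of order $|y_\epsilon-z_\epsilon|^2/\epsilon^2 \to 0$ — I would obtain in the limit $\epsilon \downarrow 0$ and using continuity of $F$ in $y$ that
$$-\partial_t\varphi(\bar t,\bar y) - \mathcal L\varphi(\bar t,\bar y) - \bigl[F(\bar y,\underline u(\bar t,\bar y)) - F(\bar y,\overline u(\bar t,\bar y))\bigr] \leq 0.$$
By definition of $l$ the bracket equals $l(\bar t,\bar y)\,w(\bar t,\bar y)$ (trivially so when $\underline u(\bar t,\bar y)=\overline u(\bar t,\bar y)$, since both sides vanish), which is exactly the desired subsolution inequality.

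The hard part will be the value convergence $\underline u(t_\epsilon,y_\epsilon)\to\underline u(\bar t,\bar y)$ and $\overline u(s_\epsilon,z_\epsilon)\to\overline u(\bar t,\bar y)$: one-sided semi-continuity combined with $(t_\epsilon,y_\epsilon),(s_\epsilon,z_\epsilon)\to(\bar t,\bar y)$ only gives $\limsup\underline u(t_\epsilon,y_\epsilon) \leq \underline u(\bar t,\bar y)$ and $\liminf\overline u(s_\epsilon,z_\epsilon) \geq \overline u(\bar t,\bar y)$, and the reverse inequalities must be extracted from the maximality $\Phi_\epsilon(t_\epsilon,y_\epsilon,s_\epsilon,z_\epsilon)\geq\Phi_\epsilon(\bar t,\bar y,\bar t,\bar y)$ together with vanishing of the penalty terms. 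A secondary technicality is forcing the penalty-induced Hessian/drift terms to disappear in the limit, for which the Lipschitz (not merely continuous) nature of $b,\sigma$ under Assumption~\ref{Ass1} is essential.
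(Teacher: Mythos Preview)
Your approach is correct and is precisely the standard one: the paper does not give its own proof but cites \cite[Lemma~A.2]{Graewe2018}, itself a modification of \cite[Lemma~3.7]{Barles1997}, and both references establish the result exactly via the doubling-of-variables argument and the parabolic Crandall--Ishii lemma that you outline. Your identification of the two delicate points---the separate value convergences $\underline u(t_\epsilon,y_\epsilon)\to\underline u(\bar t,\bar y)$, $\overline u(s_\epsilon,z_\epsilon)\to\overline u(\bar t,\bar y)$ extracted from maximality plus semicontinuity, and the vanishing of the penalty-driven drift/trace terms via Lipschitz continuity of $b,\sigma$---matches the key steps in those proofs.
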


The next lemma constructs a smooth strict supersolution to \eqref{diff-pde} of polynomial growth.

\begin{lemma}\label{bar}
For every $n\in\mathbb N$, there exists $K_n$ large enough such that
$$\chi(t,y):=\frac{e^{K_n(T-t)}(1+|y|^2)^{\frac{n}{2}}}{T-t}$$
satisfies
$$-\partial_t\chi(t,y)-\mathcal L \chi(t,y)+\frac{\chi(t,y)}{T-t}>0,    \quad (t,y)\in [T-\delta,T)\times\mathbb R^d.$$
\end{lemma}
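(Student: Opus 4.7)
The proof reduces to a direct computation. I would factor $\chi$ as $\chi(t,y) = g(t)\psi(y)$ with
\[
g(t) := \frac{e^{K_n(T-t)}}{T-t}, \qquad \psi(y) := (1+|y|^2)^{n/2},
\]
and treat the time derivative and the spatial operator separately, combining them only at the very end.

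For the time derivative I would compute
\[
g'(t) \;=\; \frac{-K_n e^{K_n(T-t)}(T-t) - e^{K_n(T-t)}\cdot(-1)}{(T-t)^2} \;=\; \frac{e^{K_n(T-t)}\bigl(1 - K_n(T-t)\bigr)}{(T-t)^2},
\]
so that
\[
-\partial_t \chi(t,y) \;=\; -g'(t)\psi(y) \;=\; \chi(t,y)\cdot\frac{K_n(T-t) - 1}{T-t}.
\]
This is the key step where the exponential factor is designed to produce the free parameter $K_n$.

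For the spatial operator I would observe that $D_y\psi = n(1+|y|^2)^{(n-2)/2}y$ and $D_y^2\psi = n(1+|y|^2)^{(n-2)/2}I + n(n-2)(1+|y|^2)^{(n-4)/2}yy^\top$. Since Assumption~\ref{Ass1} gives the linear growth bound $|b(y)|+|\sigma(y)| \leq C(1+|y|)$, a routine estimate yields a constant $C_n$ (depending on $n$ and the Lipschitz constants of $b,\sigma$) such that
\[
|\mathcal L\psi(y)| \;\leq\; C_n\,\psi(y) \qquad \text{for all } y\in\mathbb R^d.
\]
Hence $|\mathcal L\chi(t,y)| \leq C_n\,\chi(t,y)$ pointwise on $[T-\delta,T)\times\mathbb R^d$.

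Combining the two estimates,
\[
-\partial_t \chi - \mathcal L\chi + \frac{\chi}{T-t}
\;\geq\; \chi\left[\frac{K_n(T-t)-1}{T-t} - C_n + \frac{1}{T-t}\right]
\;=\; \chi\,\bigl(K_n - C_n\bigr),
\]
and since $\chi>0$ on $[T-\delta,T)\times\mathbb R^d$, choosing any $K_n > C_n$ yields the strict inequality. I do not expect any serious obstacle; the only point requiring minor care is the polynomial bookkeeping needed to extract $|\mathcal L\psi|\leq C_n\psi$ uniformly in $y$, but this follows directly from the explicit form of $D_y\psi$, $D_y^2\psi$ and the linear growth of $b,\sigma$.
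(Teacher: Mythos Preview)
Your proof is correct and essentially coincides with the paper's argument. The paper factors $\chi=h/(T-t)$ with $h(t,y)=e^{K_n(T-t)}(1+|y|^2)^{n/2}$ and uses the identity $-\partial_t\chi-\mathcal L\chi+\chi/(T-t)=(-\partial_t h-\mathcal L h)/(T-t)$, then shows $-\partial_t h-\mathcal L h>0$ via exactly the bound $|\mathcal L\psi|\le C_n\psi$ (from the linear growth of $b,\sigma$) and the choice $K_n>C_n$; your factorization $\chi=g(t)\psi(y)$ leads to the same final inequality $(K_n-C_n)\chi>0$ by a parallel computation.
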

\begin{proof}
Direct calculations verify that $h(t,y):=e^{K_n(T-t)}(1+|y|^2)^{\frac{n}{2}}$ satisfies $-\partial_t h(t,y)-\mathcal L h(t,y)>0$ in $[T-\delta,T)\times\mathbb R^d$ when $K_n$ is chosen sufficiently large; see also \cite[Proposition 5]{Alvarez1996}. Here it is used that $b$ and $\sigma$ are Lipschitz and thus are of linear growth. Hence, 
\begin{equation*}
\begin{aligned}
-\partial_t\chi(t,y)-\mathcal L \chi(t,y)+\frac{\chi(t,y)}{T-t}&=\frac{-\partial_t h(t,y)-\mathcal L h(t,y)}{T-t}>0.
\end{aligned}
\end{equation*}
\end{proof}

The following lemma is key to the proof of the comparison principle. 

\begin{lemma}\label{Phi}
If $n\in\mathbb N$ in Lemma \ref{bar}  is chosen large enough, then independent of $\alpha>0,$ the function 
$$\Phi_{\alpha}(t,y):= w(t,y)-\alpha\chi(t,y)$$
is either nonpositive or attains its supremum at some point $(t_{\alpha},y_{\alpha})$ in $[T-\delta,T)\times\mathbb R^d.$
\end{lemma}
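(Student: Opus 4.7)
The plan is to choose the spatial growth order $n$ of $\chi$ strictly larger than the polynomial growth order $m$ of $w$, so that $\alpha\chi$ dominates $w$ both at spatial infinity and near $t=T$, with the singular factor $(T-t)^{-1}$ in $\chi$ paired against the asymptotic behavior of $w$ enforced by \eqref{asympotic}. This will rule out the supremum being approached at the boundary of $[T-\delta,T)\times\mathbb{R}^d$ as soon as it is positive, so that an upper semicontinuity argument on a compact set finishes the job.

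First I would handle the behavior near $t=T$. Applying the uniform asymptotic dominance \eqref{asympotic} to $\underline u$ and $\overline u$ separately and subtracting yields, for every $\epsilon>0$, a threshold $t^*(\epsilon)\in(T-\delta,T)$ such that
$$w(t,y)(T-t)\leq 2\epsilon(1+|y|^m)\quad\text{on }[t^*(\epsilon),T)\times\mathbb{R}^d.$$
Fixing $n>m$ (with any additional restriction coming from Lemma \ref{bar}), the constant $c:=\inf_{y\in\mathbb{R}^d}(1+|y|^2)^{n/2}/(1+|y|^m)$ is strictly positive, since the ratio equals $1$ at $y=0$ and diverges as $|y|\to\infty$. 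Given $\alpha>0$, specializing to $\epsilon:=\alpha c/4$ and using the trivial bound $\chi(t,y)\geq (1+|y|^2)^{n/2}/(T-t)$, a short calculation gives $\Phi_\alpha(t,y)\leq -\alpha c(1+|y|^m)/(2(T-t))\leq 0$ on $[t^*(\epsilon),T)\times\mathbb{R}^d$.

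Next I would treat the complementary slab $[T-\delta,t^*(\epsilon)]\times\mathbb{R}^d$. Here the membership $w\in USC_m([T-\delta,T^-]\times\mathbb{R}^d)$, restricted to $[T-\delta,t^*(\epsilon)]$, supplies a constant $M>0$ with $w(t,y)\leq M(1+|y|^m)$, while $\chi(t,y)\geq (1+|y|^2)^{n/2}/\delta$. Since $n>m$, the polynomial $(1+|y|^2)^{n/2}$ eventually dominates $M(1+|y|^m)$, so there exists $R=R(\alpha)>0$ such that $\Phi_\alpha(t,y)\leq 0$ for $t\in[T-\delta,t^*(\epsilon)]$ and $|y|\geq R$.

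To conclude: if $\sup\Phi_\alpha\leq 0$ the first alternative of the lemma holds. Otherwise, the two preceding bounds confine any maximizing sequence to the compact rectangle $[T-\delta,t^*(\epsilon)]\times\{|y|\leq R\}$, and upper semicontinuity of $\Phi_\alpha$ (inherited from the u.s.c.\ of $w$ and the smoothness of $\chi$) forces the supremum to be attained at some $(t_\alpha,y_\alpha)$ in this set. The main obstacle I anticipate is the coupling in the first step between the singular factor $(T-t)^{-1}$ in $\chi$ and the singular behavior of $w$: one must pick $\epsilon$ small compared to $\alpha$ so that \eqref{asympotic} produces a bound on $w$ that $\alpha\chi$ can absorb after division by $T-t$, which makes the threshold $t^*(\epsilon)$ depend on $\alpha$; the decisive observation is that the exponent $n$ itself is chosen only from $n>m$ and hence is independent of $\alpha$, as the statement requires.
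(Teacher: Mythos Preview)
Your argument is correct and follows the same strategy as the paper: choose $n>m$ so that $\alpha\chi$ outgrows $w$ both as $t\to T$ (via \eqref{asympotic}) and as $|y|\to\infty$ (via the polynomial bound on $w$ away from $T$), then conclude by upper semicontinuity on the resulting compact set. The paper compresses your two steps into the single representation
\[
\Phi_{\alpha}(t,y)=\frac{\left[\frac{\underline u(t,y)(T-t)-\eta(y)}{1+|y|^{m}}-\frac{\overline u(t,y)(T-t)-\eta(y)}{1+|y|^{m}}\right](1+|y|^{m})-\alpha e^{K_n(T-t)}(1+|y|^2)^{n/2}}{T-t},
\]
from which $\limsup_{t\to T}\Phi_\alpha=-\infty$ uniformly in $y$ and the spatial decay are read off directly; your version makes the same mechanism explicit with the quantitative choice $\epsilon=\alpha c/4$ and the constant $c=\inf_y(1+|y|^2)^{n/2}/(1+|y|^m)>0$. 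There is no substantive difference.
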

\begin{proof}
Suppose that the supremum of $\Phi_{\alpha}$ on $[T-\delta,T)\times\mathbb R^d$ is positive and denote by $(t_k,y_k)$ a sequence in $[T-\delta,T)\times\mathbb R^d$ approaching the supremum point.  
The representation
$$\Phi_{\alpha}(t,y)=\frac{\left[\frac{\underline u(t,y)(T-t)-\eta(y)}{1+|y|^{m}}-\frac{\overline u(t,y)(T-t)-\eta(y)}{1+|y|^{m}}\right](1+|y|^{m})-\alpha e^{K_n(T-t)}(1+|y|^2)^{\frac{n}{2}}}{T-t},$$
along with condition \eqref{asympotic} shows that for any $n>m,$ %condition \eqref{asympotic} implies 
$$\limsup\limits_{t\rightarrow T} \Phi_{\alpha}(t,y)=-\infty,  \textrm{ uniformly on }\mathbb R^d.$$
Hence $\lim\limits_{k} t_k<T.$ Furthermore,  {$w\in USC_m([T-\delta,T^-]\times\mathbb R^d)$} is bounded by a function of polynomial growth uniformly away from the terminal time. Choosing $n$ large enough this shows that $\lim\limits_{k} |y_k| < \infty$. % $|y_k|\xrightarrow{k\rightarrow+\infty}+\infty$ is ruled out.  
As a result, the supremum is attained at some point $(t_{\alpha},y_{\alpha})$ because $\Phi_{\alpha}$ is upper semicontinuous. This proves the assertion. 
\end{proof}

We are now ready to prove the comparison principle. 

\begin{proof}[Proof of Proposition \ref{comparison}]
Let us fix $\alpha>0.$ By letting $\alpha\rightarrow 0$ it is sufficient to show that the function $\Phi_{\alpha}$ is nonpositive. 

In view of Lemma \ref{Phi}, we just need to consider the case where there exists a point $(t_{\alpha},y_{\alpha}) \in [T-\delta,T)\times\mathbb R^d$ such that 
%
%of $\Phi_{\alpha}$ on $[T-\delta,T)\times\mathbb R^d$, i.e.,
$$w(t,y)-\alpha\chi(t,y)\leq w(t_{\alpha},y_{\alpha})-\alpha\chi(t_{\alpha},y_{\alpha}),\quad (t,y)\in [T-\delta,T)\times\mathbb R^d.$$
This inequality can be interpreted as $w-\psi_{\alpha}$ having a global maximum at $(t_{\alpha},y_{\alpha})$, where
$$\psi_{\alpha}:=\alpha\chi(t,y)+(w-\alpha\chi)(t_{\alpha},y_{\alpha}).$$
Since $\psi_{\alpha}$ is smooth and $w$ is a viscosity subsolution to \eqref{diff-pde}, 
$$-\partial_t \psi_{\alpha}(t_{\alpha},y_{\alpha})-\mathcal L \psi_{\alpha}(t_{\alpha},y_{\alpha})-l(t_{\alpha},y_{\alpha})w(t_{\alpha},y_{\alpha})\leq 0.$$
By the mean value theorem {along with the monotonicity of  $\partial_u F$, condition \eqref{interval} and the fact that $\partial_vF(y,v) \leq -\frac{2v}{\eta(y)}$ we get that} 
\begin{equation}\label{diff-bound}
l(t,y)=\frac{F(y,\underline{u}(t,y))-F(y,\overline{u}(t,y))}{\underline{u}(t,y)-\overline{u}(t,y)}\mathbb I_{\underline{u}(t,y)\neq\overline{u}(t,y)}\leq \partial_v F(y,\frac{\eta(y)}{2(T-t)}) {\color{red} \leq} -\frac{1}{T-t}.
\end{equation}
Thus, Lemma \ref{bar} implies %and \eqref{diff-bound}, the left-hand side of this equality is just
\begin{equation}
\begin{aligned}
0 \geq &-\partial_t \psi_{\alpha}(t_{\alpha},y_{\alpha})-\mathcal L \psi_{\alpha}(t_{\alpha},y_{\alpha})-l(t_{\alpha},y_{\alpha})w(t_{\alpha},y_{\alpha}) \\
=&\alpha[-\partial_t \chi(t_{\alpha},y_{\alpha})-\mathcal L \chi(t_{\alpha},y_{\alpha})-l(t_{\alpha},y_{\alpha})w(t_{\alpha},y_{\alpha})]\\
>&-\alpha\frac{\chi(t_{\alpha},y_{\alpha})}{T-t_{\alpha}} -l(t_{\alpha},y_{\alpha})w(t_{\alpha},y_{\alpha})\\
\geq & \alpha l(t_{\alpha},y_{\alpha})\chi(t_{\alpha},y_{\alpha}) -l(t_{\alpha},y_{\alpha})w(t_{\alpha},y_{\alpha})\\
=&-l(t_{\alpha},y_{\alpha})\Phi_{\alpha}(t_{\alpha},y_{\alpha}).
\end{aligned}
\end{equation}
Since $l\leq 0,$  we can conclude that $\Phi_{\alpha}(t_{\alpha},y_{\alpha})\leq 0,$ thus $\Phi_{\alpha}\leq 0.$
\end{proof}

%%%%%%%%%%%%%%%%%%%%%%%%%%%%%%%%%%%%%%%%%%

\subsubsection{Existence via Perron's method}\label{perron}

Armed with our comparison principle, the existence of a viscosity solution to our HJB equation can be established using Perron's method as soon as suitable sub- and supersolutions can be identified.  In view of Assumption \ref{Ass2},  $\eta, \lambda\in C_m(\mathbb R^d)$ for some $m\geq 0$ and $\|\frac{\mathcal L \eta }{\eta}\|$ is well-defined and finite. Hence 
\begin{equation} \label{delta}
	\delta :=1/\|\frac{\mathcal L \eta }{\eta}\| \wedge T> 0.\footnote{{We use the convention $1/0=\infty.$}} 
\end{equation}	
By a direct computation, we can find a constant $K'$ large enough such that the function: $\hat h(t,y):=e^{K'(T-t)}(1+|y|^2)^{m/2}$ satisfying 
\[
	 -\partial_t \hat h(t,y)-\mathcal{L}\hat h(t,y)-\lambda(y)\geq 0.
\]
Let us then define
\begin{equation*}
	 \check v(t,y):=\frac{\eta(y)-\eta(y)\|\frac{\mathcal L \eta }{\eta}\|(T-t)}{e^{\theta(T-t)}(T-t)} \quad \mbox{and} \quad 
	 \hat v(t,y):=\frac{\eta(y)+\eta(y)\|\frac{\mathcal L \eta }{\eta}\|(T-t)}{(T-t)}+\hat h(t,y).
\end{equation*}

\begin{proposition}\label{local-estimate}
Under Assumption \ref{Ass1}, \ref{Ass2} the functions $\check v, \hat v$ are a nonnegative classical sub- and supersolution to~\eqref{pde-v} on $[T-\delta,T)\times\mathbb R^d,$ respectively.
\end{proposition}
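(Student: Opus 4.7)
The strategy is direct verification: plug $\check v$ and $\hat v$ into the PDE operator $-\partial_t\,\cdot - \mathcal L\,\cdot - F(y,\cdot)$ and check the correct sign. Nonnegativity is essentially built in, since on $[T-\delta,T)$ we have $(T-t) \leq 1/\|\mathcal L\eta/\eta\|$, so the numerator defining $\check v$ is nonnegative, and $\hat v$ is a sum of manifestly nonnegative terms. The key algebraic simplification I will use throughout is
\[
	\frac{\theta\gamma(y) v}{\gamma(y)+v} - \theta v \;=\; -\,\frac{\theta v^{2}}{\gamma(y)+v}\;\in[-\theta v,0]\quad(v\geq 0),
\]
which rewrites $F(y,v) = \lambda(y) - v^{2}/\eta(y) - \theta v^{2}/(\gamma(y)+v)$ as $\lambda$ minus two nonnegative terms. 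Both candidates have the form $g(t)\eta(y)$ (plus the extra $\hat h$ for $\hat v$), so $\mathcal L$ acts only on $\eta$ via $\mathcal L(g\eta) = g\mathcal L\eta$, and the crude bound $|\mathcal L\eta| \leq \|\mathcal L\eta/\eta\|\,\eta$ (call this constant $a$) reduces the PDE inequality to a scalar differential inequality in the time-dependent coefficient.

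For the supersolution, write $\hat v = \tilde g(t)\eta(y) + \hat h(t,y)$ with $\tilde g(t)=(1+a(T-t))/(T-t)$. Since $-\theta v^{2}/(\gamma+v)\leq 0$, the cleanest upper bound $F(y,\hat v)\leq \lambda(y) - \hat v^{2}/\eta(y)$ applies. The assertion in the paper that $-\partial_t\hat h - \mathcal L \hat h \geq \lambda$ (a routine check using $K'=a+1$ and $\|\lambda/\eta\|\eta\geq\lambda$) absorbs the $\lambda$ on the right, and using $\hat v \geq \tilde g\eta$ gives $\hat v^{2}/\eta^{2}\geq \tilde g^{2}$. So it suffices to show $\tilde g^{2} - \tilde g' - a\tilde g \geq 0$, which is a direct calculation: with $s=T-t$ one finds $\tilde g^{2} - \tilde g' - a\tilde g = a/s \geq 0$.

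For the subsolution, with $g(t)=(1-a(T-t))/[e^{\theta(T-t)}(T-t)]$ and $\check v = g\eta$, the lower bound $F(y,\check v)\geq -\check v^{2}/\eta - \theta\check v$ (dropping $\lambda\geq 0$ and using $\theta v^{2}/(\gamma+v)\leq \theta v$) together with $-g\mathcal L\eta\leq ag\eta$ reduces the subsolution inequality to
\[
	-g'(t) + a\,g(t) + g(t)^{2} + \theta\,g(t)\;\leq\;0\qquad\text{on }[T-\delta,T).
\]
Multiplying by $s^{2}e^{2\theta s}$ and setting $u=as\in[0,1]$, this is equivalent to $(1-u)^{2}\leq e^{\theta s}(1-u+u^{2})$, which holds since $1-u+u^{2} - (1-u)^{2} = u\geq 0$ and $e^{\theta s}\geq 1$.

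\textbf{Main obstacle.} The only subtle design choice is the $e^{-\theta(T-t)}$ factor in $\check v$: without it the extra $\theta g$ term generated by the dark-pool component of $F$ would prevent the scalar inequality from closing near $t=T$. Everything else is bookkeeping, provided one keeps track of the sign of $g,\tilde g$ (both nonnegative on $[T-\delta,T)$ by the choice of $\delta$), which is what legitimises replacing $\mathcal L\eta$ by $\pm a\eta$ in the right direction.
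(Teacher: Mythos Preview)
Your proof is correct and follows essentially the same approach as the paper: both verify the sub/supersolution property by direct substitution, using the bounds $F(y,v)\leq \lambda(y)-v^{2}/\eta(y)$ for the supersolution and $F(y,v)\geq -v^{2}/\eta(y)-\theta v$ for the subsolution, together with $|\mathcal L\eta|\leq \|\mathcal L\eta/\eta\|\,\eta$. Your reduction to scalar differential inequalities in $t$ (via $\check v=g(t)\eta$ and $\hat v=\tilde g(t)\eta+\hat h$) is a cleaner packaging of the same computation the paper carries out term by term; the paper's use of $(u+v+w)^{2}\geq u^{2}+2uv$ and $(u-v)^{2}\leq u^{2}-uv$ is replaced in your argument by the slightly tighter bounds $\hat v^{2}\geq (\tilde g\eta)^{2}$ and the final elementary inequality $(1-u)^{2}\leq 1-u+u^{2}$, but the substance is the same.
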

\begin{proof}
To verify the supersolution property of $\hat v$, we first verify that
\begin{equation}\label{derivative-v-hat}
\begin{split} 
	& -\partial_t\hat v(t,y)-\mathcal L \hat v(t,y) \\
	= &-\frac{\eta(y)+\mathcal L \eta(y)(T-t)+\mathcal L \eta(y)\|\frac{\mathcal L \eta }{\eta}\|(T-t)^2}{(T-t)^2}-\partial_t \hat h(t,y)-\mathcal{L}\hat h(t,y)
\end{split}
\end{equation}
Recalling the definition~\eqref{nonlinearity} of $F$, we have since $\hat v\geq0$,
\[
	-F(y,\hat v(t,y))\geq -\lambda(y)+\frac{\hat v(t,y)^{2}}{\eta(y)}.
\]
Next, we apply the inequality $(u+v+w)^2\geq u^2+2uv$ for $u,v,w\geq 0$ to the term $\hat v(t,y)^2$ to obtain
\begin{equation} \label{nonlinearity-v-hat}
	-F(y,\hat v(t,y))\geq-\lambda (y)+\frac{\eta(y)^2+2\eta(y)^2\|\frac{\mathcal L \eta }{\eta}\|(T-t)}{\eta(y) (T-t)^2}.
\end{equation}
Adding \eqref{derivative-v-hat} and \eqref{nonlinearity-v-hat} yields
\begin{equation*}
\begin{aligned} 
	-\partial_t\hat v(t,y)-\mathcal L \hat v(t,y)-F(y,\hat v(t,y))\geq&\frac{2\eta(y)\|\frac{\mathcal L \eta }{\eta}\|-\mathcal L \eta(y)-\mathcal L \eta(y)\|\frac{\mathcal L \eta }{\eta}\|(T-t)}{(T-t)}\\
	&-\partial_t \hat h(t,y)-\mathcal{L}\hat h(t,y)-\lambda(y).
\end{aligned}
\end{equation*}
The definition of $\delta$ yields $1\geq \|\frac{\mathcal L \eta }{\eta}\|(T-t)$ for $t\in [T-\delta, T)$ and so, 
\begin{equation*}
\begin{aligned} 
	&2\eta(y)\|\frac{\mathcal L \eta }{\eta}\|-\mathcal L \eta(y)-\mathcal L \eta(y)\|\frac{\mathcal L \eta }{\eta}\|(T-t)\\
	\geq& \eta(y)\|\frac{\mathcal L \eta }{\eta}\|\cdot\left[1+\|\frac{\mathcal L \eta }{\eta}\|(T-t)\right]-\mathcal L \eta(y)-\mathcal L \eta(y)\|\frac{\mathcal L \eta }{\eta}\|(T-t)\\
	=&\left[1+\|\frac{\mathcal L \eta }{\eta}\|(T-t)\right]\cdot\left[\eta(y)\|\frac{\mathcal L \eta }{\eta}\|-\mathcal L \eta(y)\right]\geq 0.
\end{aligned}
\end{equation*}
We conclude that
\begin{equation*}
	-\partial_t\hat v(t,y)-\mathcal L \hat v(t,y)-F(y,\hat v(t,y))\geq 0.
\end{equation*}
Next, we verify the subsolution property of $\check v$. By direct computation,
\begin{equation} \label{derivative-v-check}
	-\partial_t\check v(t,y)-\mathcal L \check v(t,y)=-\frac{\eta(y)+\mathcal L \eta(y)(T-t)-\mathcal L \eta(y)\|\frac{\mathcal L \eta }{\eta}\|(T-t)^2}{e^{\theta(T-t)}(T-t)^2}-\theta \check v(t,y).
\end{equation}
On the other hand, since $\lambda,\gamma\geq 0$, and $\check v\geq 0$ on $[T-\delta, T)\times\mathbb R^d$,
\[
	-F(y,\check v(t,y))\leq \frac{\check v(t,y)^2}{\eta(y)}+\theta \check v(t,y).
\]
We estimate $\check v(t,y)^2$ using the inequality $(u-v)^2\leq u^2-uv$ for $u\geq v\geq 0$ and obtain,
\begin{equation} \label{nonlinearity-v-check}
-F(y,\check v(t,y))\leq\frac{\eta(y)-\eta(y)\|\frac{\mathcal L \eta }{\eta}\|(T-t)}{e^{2\theta(T-t)}(T-t)^2}+\theta \check v(t,y).
\end{equation}
{Since $e^{-2\theta(T-t)}\leq e^{-\theta(T-t)}$}, adding \eqref{derivative-v-check} and \eqref{nonlinearity-v-check} yields
\begin{equation*}
-\partial_t\check v(t,y)-\mathcal L \check v(t,y)-F(t,\check v(t,y))\leq -\frac{\eta(y)\|\frac{\mathcal L \eta }{\eta}\|+\mathcal L \eta(y)-\mathcal L \eta(y)\|\frac{\mathcal L \eta }{\eta}\|(T-t)}{e^{\theta(T-t)}(T-t)}.
\end{equation*}
Using again that $1\geq \|\frac{\mathcal L \eta }{\eta}\|(T-t)$ we obtain, 
\begin{equation*}
\begin{aligned} 
	&\eta(y)\|\frac{\mathcal L \eta }{\eta}\|+\mathcal L \eta(y)-\mathcal L \eta(y)\|\frac{\mathcal L \eta }{\eta}\|(T-t)\\
	\geq & \eta(y)\|\frac{\mathcal L \eta }{\eta}\|\cdot\left[1-\|\frac{\mathcal L \eta }{\eta}\|(T-t)\right]+\mathcal L \eta(y)-\mathcal L \eta(y)\|\frac{\mathcal L \eta }{\eta}\|(T-t)\\
	=&\left[1-\|\frac{\mathcal L \eta }{\eta}\|(T-t)\right]\cdot\left[\eta(y)\|\frac{\mathcal L \eta }{\eta}\|+\mathcal L \eta(y)\right]\geq 0.
\end{aligned}
\end{equation*}
Thus,
\begin{equation*}
-\partial_t\check v(t,y)-\mathcal L \check v(t,y)-F(t,\check v(t,y))\leq 0.
\end{equation*}
\end{proof}
\begin{proof}[\textbf{Proof of Theorem \ref{thm-existence}.}]
From the definition of $\check v, \hat v$ we have
\begin{equation} \label{order}
\begin{aligned}
	(T-t)\check v(t,y)&=\eta(y)+ \eta(y) O(T-t) \quad \text{uniformly in $y$ as $t\rightarrow T$.}\\
   (T-t)\hat v(t,y)&=\eta(y)+ (1+|y|^{m})O(T-t) \quad \text{uniformly in $y$ as $t\rightarrow T$.}
\end{aligned}
\end{equation}
Then for $\varepsilon=\frac{1}{2},$ there exists $\delta_0\in(0,\delta]$ such that for all $t\in [T-\delta_0,T),$
\begin{equation*}
\check v(t,y)(T-t)>\eta(y)-\frac{1}{2} \eta(y)=\frac{1}{2}\eta(y)\quad\text{ uniformly on } \mathbb R^d.
\end{equation*} 
Since $\eta\in C_m(\mathbb R^d),$ we obtain from \eqref{order} that
\begin{equation} \label{limit-v}
\begin{aligned}
\lim\limits_{t \rightarrow T}\frac{\check v(t,y)(T-t)-\eta(y)}{1+|y|^{m}}=\lim\limits_{t \rightarrow T}\frac{\hat v(t,y)(T-t)-\eta(y)}{1+|y|^{m}}=0,\quad\text{ uniformly on } \mathbb R^d.
\end{aligned}
\end{equation}  

In order to apply Perron's method, we set
 $$\mathcal S=\{u | u\text{ is a subsolution of }\eqref{pde-v} \text{ on } [T-\delta_0,T)\times\mathbb R^d\text{ and }u\leq \hat v\}.$$ From Proposition \ref{local-estimate} we know that $\check v\in \mathcal S,$ so $\mathcal S$ is non-empty. Thus, the function
\[
 v(t,y)=\sup \{u(t,y):u\in\mathcal S\}
\]
is well-defined. Classical arguments\footnote{ 
The standard Perron method of finding viscosity solutions for elliptic PDEs can be found in \cite{Crandall1992}. We refer to \cite [Appendix A] {Zhan1999} for the proof of this method for parabolic equations.} show that the upper semi-continuous envelope $v^*$ is a viscosity subsolution to~\eqref{pde-v}. From \cite [Lemma~A.2] {Zhan1999}, the lower semi-continuous envelope $v_*$ of $v$ is also a viscosity supersolution to~\eqref{pde-v}. Since $\check v\leq v_*\leq v^*\leq \hat v, $ we have that for all $t\in [T-\delta_0,T),$
\[
v_*(t,y)(T-t), v^*(t,y)(T-t)\geq \frac{1}{2}\eta(y), \quad\text{ uniformly on } \mathbb R^d.
\]
and
\begin{align*}
	\frac{\check v(t,y)(T-t)-\eta(y)}{1+|y|^{m}}\leq \frac{v_*(t,y)(T-t)-\eta(y)}{1+|y|^{m}} \leq & \frac{v^*(t,y)(T-t)-\eta(y)}{1+|y|^{m}} \\ \leq & \frac{\hat v(t,y)(T-t)-\eta(y)}{1+|y|^{m}}.
\end{align*}
Hence, it follows from \eqref{limit-v} that, 
\begin{equation} 
\begin{aligned}
\lim\limits_{t \rightarrow T}\frac{v_*(t,y)(T-t)-\eta(y)}{1+|y|^{m}}=\lim\limits_{t \rightarrow T}\frac{v^*(t,y)(T-t)-\eta(y)}{1+|y|^{m}}=0,\quad\text{ uniformly on } \mathbb R^d.
\end{aligned}
\end{equation}  
From our comparison principle [Proposition \ref{comparison}] we conclude that $v^*\leq v_*\text{ on } [T-\delta_0,T)\times\mathbb R^d,$ which shows that $v$ is the desired viscosity solution to \eqref{inflator-pde} that belongs to $C_m([T-\delta_0,T^-]\times\mathbb R^d)$. 

By\cite [Remark 6]{Alvarez1996}, there exists a unique viscosity solution $v\in C_m([0,T-\delta_0]\times\mathbb R^d)$ to \eqref{inflator-pde} when imposed at $t=T-\delta_0$ with a terminal value in $C_m(\mathbb R^d).$ Hence from the comparison principle for {\sl continuous} viscosity solutions \cite [Lemma 3.1]{Graewe2018}, we get a unique global viscosity solution $$v\in C_m([0,T^-]\times\mathbb R^d).$$
\end{proof}

\begin{remark}
 If all the coefficients of the generator $F$ and the SDE \eqref{sde} are bounded, then one can show that twice differentiability of $\eta$ is not needed; only a uniform continuity is required to choose continuous solutions which satisfying the conditions \eqref{asympotic} and \eqref{interval}. Thus a unique viscosity solution can be obtained by the same argument above.
\end{remark}

%%%%%%%%%%%%%%%%%%%%%%%%%%%%%%%%%%%%%%%%%%

\subsection{Verification}\label{veri-viscosity}

	This section is devoted to the verification argument. Throughout,  $v\in C_m([0,T^-]\times\mathbb R^d)$ denotes the unique nonnegative viscosity solution to the singular terminal value problem \eqref{pde-v}. We will prove that the viscosity solution is indeed the value function to our stochastic control problem.

In a first step we are now going to show that the feedback control given in \eqref{optimal-control} is indeed admissible. 
\begin{lemma}	\label{lemma-admissible}
	The pair of feedback controls $(\xi^*,\mu^*)$ given by \eqref{optimal-control} is admissible. 
\end{lemma}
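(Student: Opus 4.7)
The plan is to verify three things: $\xi^* \in L^4_{\mathcal F}(t,T;\mathbb R)$, $\mu^*$ is predictable, and $X^*_{T-}=0$ almost surely. Predictability of $\mu^*$ is immediate: it is the product of a continuous function of the factor process (which is progressively measurable) and the left-continuous process $X^*_{s-}$, hence left-continuous adapted and therefore predictable. The substance of the argument lies in proving the terminal constraint and the $L^4$-integrability of $\xi^*$.

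The key ingredient is a sharp two-sided control of $v(s,Y_s^{t,y})/\eta(Y_s^{t,y})$ near $T$, obtained by squeezing $v$ between the explicit smooth sub- and supersolutions $\check v$ and $\hat v$ from Proposition \ref{local-estimate}. From the definition of $\hat v$ one reads off the upper bound $v(s,y)/\eta(y) \leq 1/(T-s) + C$ valid for $s \in [T-\delta, T)$ and uniformly in $y$, while a first-order expansion of the factor $(1-\|\mathcal L \eta/\eta\|(T-s))/e^{\theta(T-s)}$ in the definition of $\check v$ yields the matching lower bound $v(s,y)/\eta(y) \geq 1/(T-s) - C$ on the same interval.

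I would then plug the lower bound into the explicit representation \eqref{optimal-position-process}. Since $v,\gamma \geq 0$, the jump factors $1-v/(\gamma+v)=\gamma/(\gamma+v)$ lie in $[0,1]$, so $|X^*_s| \leq |x|\exp\bigl(-\int_t^s v(r,Y_r)/\eta(Y_r)\,dr\bigr)$. Integrating the lower bound gives $\int_t^s v/\eta\,dr \geq \log\tfrac{T-t}{T-s} - C(s-t)$, and hence $|X^*_s| \leq |x| e^{CT}(T-s)/(T-t)$ on $[T-\delta, T)$. Letting $s \to T^-$ yields $X^*_{T-}=0$ a.s., and combining with the upper bound on $v/\eta$ produces $|\xi^*_s| = (v/\eta)|X^*_s| \leq C'$ on the same interval. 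On the complementary interval $[t, T-\delta]$ the a priori estimate $v \leq \tilde C \eta$ from the preceding lemma gives $v/\eta \leq \tilde C$ and $|X^*_s| \leq |x|$, so $|\xi^*_s| \leq \tilde C |x|$. Overall $\xi^* \in L^\infty$, a fortiori in $L^4$.

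The main subtlety is the sharpness of the lower bound on $v/\eta$. The coarser bound $v(s,y)(T-s) \geq \eta(y)/2$ from \eqref{interval} would yield only $|X^*_s|=O((T-s)^{1/2})$ and $|\xi^*_s|=O((T-s)^{-1/2})$, which fails to lie in $L^4$. It is precisely the first-order correction in the expansion of $\check v$ around $T$ that promotes the decay of $X^*$ from $O((T-s)^{1/2})$ to the required $O(T-s)$ and makes $\xi^*$ bounded near the terminal time; this is the only nontrivial step in the admissibility argument.
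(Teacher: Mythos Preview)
Your proposal is correct and follows essentially the same route as the paper: both squeeze $v$ between the explicit sub- and supersolutions $\check v,\hat v$ to obtain $v/\eta = 1/(T-s) + O(1)$ near $T$, deduce $|X^*_s| = O(T-s)$ from the lower bound (hence $X^*_{T-}=0$) and $|\xi^*_s| = O(1)$ from the upper bound, and handle the interval $[t,T-\delta]$ via the a~priori estimate $v \leq \tilde C\eta$. Your closing remark on why the coarser bound $v(T-s)\geq \eta/2$ would not suffice for $L^4$-integrability is a nice observation that the paper leaves implicit.
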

\begin{proof}
	{Given the feedback form in \eqref{optimal-control}, one can easily obtain that the pair of controls $(\xi^*,\mu^*)$ is admissible and the resulting portfolio process $(X_s^*)_{s\in[t,T]}$ is monotone. It remains to verify the liquidation constraint.} 
Since $\check{v}\leq v\leq \hat v$ on $[T-\delta,T)$ where $\delta$ is defined in (\ref{delta}), it holds for any $r\in [T-\delta,T)$ that,
\begin{equation*}\label{bound}
\frac{1-\|\frac{\mathcal L\eta}{\eta}\|(T-r)}{e^{\theta(T-r)}(T-r)}\eta(Y_r^{t,y})\leq v(r,Y_r^{t,y})\leq \frac{1+\|\frac{\mathcal L\eta}{\eta}\|(T-r)}{T-r}\eta(Y_r^{t,y})+\hat h(r,Y_r^{t,y}).
\end{equation*}
For $s\in [T-\delta,T),$
\begin{equation} \label{liquidation}
	\begin{aligned}
		|X_s^*| &\leq |x|\exp\left(-\int_{t}^s\frac{v(r,Y_r^{t,y})}{\eta(Y_r^{t,y})}\,dr\right)\\
	&\leq|x|\exp\left(-\int_{T-\delta}^{s}\frac{v(r,Y_r^{t,y})}{\eta(Y_r^{t,y})}\,dr\right) \\
	&\leq|x| \exp\left(-\int_{T-\delta}^{s} \frac{1-\|\frac{\mathcal L\eta}{\eta}\|(T-r)}{e^{\theta(T-r)}(T-r)}\,dr\right) \\
	&\leq|x| \exp\left(\int_{T-\delta}^{s} \frac{e^{\theta(T-r)}-[1-\|\frac{\mathcal L\eta}{\eta}\|(T-r)]}{e^{\theta(T-r)}(T-r)}\,dr\right) \exp\left(-\int_{T-\delta}^{s} \frac{1}{T-r}\,dr\right) \\
	&\leq|x| \exp\left(\int_{T-\delta}^{s}\left[ \frac{e^{\theta(T-r)}-1}{e^{\theta(T-r)}(T-r)}+\frac{\|\frac{\mathcal L\eta}{\eta}\|}{e^{\theta(T-r)}}\right]\,dr\right)\cdot \frac{T-s}{\delta}\\
	&\leq C|x| \frac{T-s}{\delta}.
	\end{aligned}
\end{equation}
The last inequality holds because $\lim\limits_{r\rightarrow T} \frac{e^{\theta(T-r)}-1}{e^{\theta(T-r)}(T-r)}=\theta.$ As a result, $X^*_{T-}=0$ and hence  $X^*_T=0.$

%It remains to establish the integrability of $\xi^*$.Since $\hat h,1/\eta, v$ are polynomial growth, we see that 
%
%\begin{equation*} 
%	\begin{aligned}	
%	\mathbb{E}\int_{0}^{T} |\xi_s^*| ^{4} \,ds\leq& T\,\mathbb{E}\left[ \sup_{0\leq s\leq T}|\xi_s^*|^4\right]\\
%	\leq&T\left(\mathbb{E}\left[  \sup_{0\leq s\leq T-\delta}|\xi_s^*|^4+ \sup_{T-\delta\leq s\leq T}|\xi_s^*|^4\right]\right)\\	
%	 =&T\left(\mathbb{E}\left[  \sup_{0\leq s\leq T-\delta}\Big(\frac{v(s,Y_s^{t,y})}{\eta(Y_s^{t,y})}|X_s^*|\Big)^4+\sup_{T-\delta\leq s\leq T}\Big(\frac{v(s,Y_s^{t,y})}{\eta(Y_s^{t,y})}|X_s^*|\Big)^4\right]\right)\\	
%     \leq&  C|x|^4\left(\mathbb{E}\left[  \sup_{0\leq s\leq T-\delta}\Big(\frac{v(s,Y_s^{t,y})}{\eta(Y_s^{t,y})}\Big)^4 \right]+\mathbb{E} \left[ \sup_{T-\delta\leq s\leq T}\Big(\frac{\hat h(s,Y_s^{t,y})}{\eta(Y_s^{t,y})}\Big)^4\right]\right)\\   
%     &+C|x|^4\sup_{T-\delta\leq s\leq T}\Big(\frac{1+\|\frac{\mathcal L\eta}{\eta}\|(T-s)}{T-s}\cdot \frac{T-s}{\delta}\Big)^4\\
%        \leq &C|x|^4.
%	\end{aligned}
%\end{equation*}
%It follows that $\xi^*\in L^4_{\mathcal F}(0,T;\mathbb R)$ and hence that $(\xi^*,\mu^*)$ is admissible. 
\end{proof}

%%%%%%%%%%%%%%%%%%%%%%%%%%%%%%%%%%%%%%%%%%

It has been shown in \cite[Lemma 5.2]{Graewe2018} that we may w.l.o.g restrict ourselves to admissible controls that result in a monotone portfolio process. We denote by $\bar{\mathcal A}(t, x)$ the set of all admissible controls under which the portfolio process is monotone. 
%For any $(\xi,\mu)\in \bar{\mathcal A}(t,x)$ the expected residual costs vanish as $s\rightarrow T$ as shown by the following lemma.	
%\begin{lemma} \label{lemma-finite-costs}
%	For every $(\xi,\mu) \in \bar{\mathcal A}(t,x)$  it holds that
%\begin{equation} \label{vanish}
%	\mathbb E\left[v(s,Y_s^{t,y})|X_s^{\xi,\mu}|^2\right] \longrightarrow 0, \quad \text{$s\rightarrow T$.}
%\end{equation}
%\end{lemma}
%\begin{proof}
%Following the proof in \cite[Lemma 5.3]{Graewe2018}, we have 
%\[
%|X_s^{\xi,\mu}|^2\leq C (T-s)E\left[\int_s^T|\xi_r|^2\,dr\Big|\mathcal F_s\right].
%\]
%Therefore, recalling that $v\leq \hat v$,
%\begin{align*} 
% \mathbb E\left[v(s,Y_s^{t,y})|X_s^{\xi,\mu}|^2\right]&\leq C\mathbb E\left[\frac{ \eta(Y_s^{t,y})+\hat h(s,Y_s^{t,y})(T-s)}{T-s} (T-s)E\Big[\int_s^T|\xi_r|^2\,dr\Big|\mathcal F_s\Big]\right]\\
%	&\leq C\sqrt{T-s} \cdot\mathbb E\left[\int_s^T|\xi_r|^{4}\,dr\right]^{1/2}\cdot\left(\mathbb E\left[\eta(Y_s^{t,y})^2\right]^{1/2}+\mathbb{E}\left[h(s,Y_s^{t,y})^2\right]^{1/2}\right).
%\end{align*}
%Letting $s\rightarrow T$, the desired result \eqref{vanish} is obtained since $\xi\in L^{4}_{\mathcal F}(0,T;\mathbb R)$, $\eta\in C_m\left( \mathbb R^d \right)$ and $h\in C_m\left([0,T]\times \mathbb R^d \right)$ along with Assumption~\ref{Ass2} and the moment estimates \eqref{Y-estimate} of $Y$ .
%\end{proof}

Next, we give a probabilistic representation of the viscosity solution to \eqref{pde-v}. In \cite{Popier2017}, the author showed that the possibly discontinuous minimal solution of a certain backward stochastic differential equation with singular terminal condition gives a probabilistic representation of the minimal viscosity solution of an associated partial differential equation; continuity of the solution was {\sl not} established. However, continuity is necessary to carry out the verification argument. We obtain a solution to the corresponding FBSDE in a different way since the existence of the (continuous) viscosity solution has already been proved. 
%In \cite{Popier2017}, the author showed that the possibly discontinuous minimal solution of a certain backward stochastic differential equation with singular terminal condition gives a probabilistic representation of the minimal viscosity solution of an associated partial differential equation; continuity of the solution was {\sl not} established. However, continuity is necessary to carry out the verification argument. We obtain a solution to the corresponding FBSDE in a different way since the existence of the (continuous) viscosity solution has already been proved. 
{
\begin{proposition}\label{prop-BSDE}
 Under Assumptions \ref{Ass1}, \ref{Ass2}, for any fixed $\epsilon\in(0,T)$ and $(t,y)\in [\epsilon,T)\times\mathbb R^d$, there exists a pair of processes $ (U^{t,y}, Z^{t,y})\in S^2_{\mathcal{F}}(t,T;\mathbb R)\times L^2_{\mathcal{F}}(t,T;\mathbb R^{1 \times \tilde d})$ satisfying that $U^{t,y}_t=v(t-\epsilon,y)$ and for any $\epsilon\leq t\leq r\leq s\leq T,$
\[
U^{t,y}_r=U^{t,y}_s+\int^s_r F(Y^{t,y}_{\rho},U^{t,y}_{\rho})d\rho- \int^s_r Z^{t,y}_{\rho} dW_{\rho}.
\]

\end{proposition}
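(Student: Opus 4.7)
The plan is to construct $(U^{t,y},Z^{t,y})$ by a localisation and patching argument: on each compact subinterval $[t,s]$ with $s<T$, the restriction of $v$ to $[0,s]\times\mathbb R^d$ is a continuous viscosity solution of polynomial growth to the semilinear PDE~\eqref{inflator-pde} with continuous terminal datum $v(s,\cdot)\in C_m(\mathbb R^d)$ by Theorem~\ref{thm-existence}. The a priori bound $0\le v(r,z)\le\tilde C_s\eta(z)$ from~\eqref{v-estimate} controls the range of $v$, and I would invoke the nonlinear Feynman--Kac correspondence to identify $v(r,Y^{t,y}_r)$ with the first component of a BSDE solution, then patch across a sequence $s_n\uparrow T$.

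Because the driver $F(y,u)$ in~\eqref{nonlinearity} fails to be globally Lipschitz in $u$ (owing to the quadratic term $-u^2/\eta(y)$), I would first introduce a truncated driver $F^L(y,u):=F(y,-L\vee u\wedge L)$, which is globally Lipschitz in $u$ with coefficients of polynomial growth in $y$. Classical Pardoux--Peng theory produces a unique pair $(U^L,Z^L)\in S^2_{\mathcal F}(t,s)\times L^2_{\mathcal F}(t,s;\mathbb R^d)$ solving
\[
U^L_r=v(s,Y^{t,y}_s)+\int_r^s F^L(Y^{t,y}_\rho,U^L_\rho)\,d\rho-\int_r^s Z^L_\rho\,dW_\rho,\qquad r\in[t,s].
\]
The Feynman--Kac theorem for Lipschitz drivers (El~Karoui--Peng--Quenez, Pardoux--Peng) then gives $U^L_r=u^L(r,Y^{t,y}_r)$ for a continuous viscosity solution $u^L\in C_m([0,s]\times\mathbb R^d)$ of the PDE with driver $F^L$ and terminal value $v(s,\cdot)$. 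Since $v$ takes values in $[0,\tilde C_s\eta]$ and $F=F^L$ on this range once $L\ge\tilde C_s\eta(y)$, $v$ is itself a continuous viscosity solution to the truncated PDE on any bounded $y$-region. The comparison principle for continuous viscosity solutions~\cite[Lemma~3.1]{Graewe2018}, applicable in the truncated (Lipschitz) setting, then forces $u^L\equiv v$ on $[0,s]\times\mathbb R^d$, so $U^L_r=v(r,Y^{t,y}_r)$.

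Removing the truncation requires a stopping-time argument. Define $\tau_L:=\inf\{\rho\ge t:\tilde C_s\eta(Y^{t,y}_\rho)>L\}\wedge s$. On $[t,\tau_L]$ we have $|U^L_\rho|=v(\rho,Y^{t,y}_\rho)\le\tilde C_s\eta(Y^{t,y}_\rho)\le L$, so the truncation is inactive and $(U^L,Z^L)|_{[t,\tau_L]}$ solves the BSDE with the original driver $F$. Letting $L\to\infty$, the continuity of $\eta$, polynomial growth, and moment estimate~\eqref{Y-estimate} imply $\tau_L\uparrow s$ almost surely; standard $L^2$ stability for BSDEs on the shrinking set $\{\tau_L<s\}$, combined with the uniform bound $v(\rho,Y^{t,y}_\rho)\le\tilde C_s\eta(Y^{t,y}_\rho)\in L^2$, yields convergence of $U^L$ to $U_\rho:=v(\rho,Y^{t,y}_\rho)$ in $S^2_{\mathcal F}(t,s)$ and Cauchy convergence of $Z^L$ in $L^2_{\mathcal F}(t,s;\mathbb R^d)$ to a limit $Z$. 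The limiting pair $(U,Z)$ solves the BSDE with the original driver $F$ on $[t,s]$; equivalently $Z$ is the integrand in the martingale representation of the square-integrable martingale $v(r,Y^{t,y}_r)+\int_t^r F(Y^{t,y}_\rho,v(\rho,Y^{t,y}_\rho))\,d\rho$.

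Finally, for $t\le s<s'<T$ the BSDE solutions on $[t,s]$ and $[t,s']$ agree on the overlap, because both $U$-components equal $v(r,Y^{t,y}_r)$ and both $Z$-components arise from the martingale representation of the same martingale. Patching along $s_n\uparrow T$ yields processes $(U^{t,y},Z^{t,y})$ defined on $[t,T)$ whose restriction to each $[t,s_n]$ lies in $S^2_{\mathcal F}(t,s_n)\times L^2_{\mathcal F}(t,s_n;\mathbb R^d)$, which is exactly $S^2_{\mathcal F}(t,T^-)\times L^2_{\mathcal F}(t,T^-;\mathbb R^d)$, and $U^{t,y}_t=v(t,y)$ by construction. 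The main obstacle is the removal of truncation in $L^2$: one must combine the pathwise bound $v(r,Y^{t,y}_r)\le\tilde C_s\eta(Y^{t,y}_r)$, the integrability of $\sup_{\rho\in[t,s]}\eta(Y^{t,y}_\rho)$ from~\eqref{Y-estimate}, and the local Lipschitz dependence of $F$ on $u$, in order to pass from the truncated BSDE (whose well-posedness is classical) to the original one.
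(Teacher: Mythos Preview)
Your truncation-and-stopping-time strategy is reasonable in spirit, but there is a genuine gap in the step where you claim $u^L\equiv v$ on $[0,s]\times\mathbb R^d$. You correctly observe that $F=F^L$ on the range of $v$ only where $\tilde C_s\eta(y)\le L$, i.e.\ on a bounded $y$-region; but the comparison principle in \cite[Lemma~3.1]{Graewe2018} is a global statement on $\mathbb R^d$, not on bounded domains (where lateral boundary data would be needed). In fact $v$ need \emph{not} be a supersolution of the truncated PDE at points where $v(t,y)>L$: since $u\mapsto F(y,u)$ is nonincreasing on $[0,\infty)$, one has $F^L(y,v)=F(y,L)\ge F(y,v)$ there, which preserves the subsolution property but destroys the supersolution one. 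Without $u^L=v$ globally, your subsequent stopping-time argument breaks down, because it uses $|U^L_\rho|=v(\rho,Y^{t,y}_\rho)$ on $[t,\tau_L]$ to deactivate the truncation.

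The paper avoids all of this by a much simpler device: instead of truncating the nonlinear driver, it \emph{freezes} the $u$-variable at the known viscosity solution, setting $f(r,z):=F(z,v(r,z))$. The bound $0\le v\le\tilde C\eta$ from~\eqref{v-estimate} makes $f\in C_m([0,T_0]\times\mathbb R^d)$, so the BSDE on $[t,T_0]$ with terminal datum $v(T_0,Y^{t,y}_{T_0})$ and driver $f(r,Y^{t,y}_r)$ is classical (no Lipschitz issue in $U$ at all). Pardoux's Feynman--Kac theorem then gives $U^{t,y}_r=w(r,Y^{t,y}_r)$ for a viscosity solution $w$ of the \emph{linear} PDE with source $f$; but $v$ trivially solves the same linear PDE, and comparison (now for a linear equation with polynomial-growth source) gives $w=v$. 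Substituting $U^{t,y}_r=v(r,Y^{t,y}_r)$ back into $f$ recovers the nonlinear driver $F(Y^{t,y}_r,U^{t,y}_r)$, so $(U^{t,y},Z^{t,y})$ solves the original BSDE on $[t,T_0]$; letting $T_0\uparrow T$ finishes. This linearisation trick sidesteps both the truncation and the stopping-time limit you needed.
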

\begin{proof}
We consider the forward-backward system  
\begin{equation}\label{BSDE}
\left\{
\begin{aligned}
dY_s&=b(Y_s)ds+\sigma(Y_s)dW_s, &s\in[t,T],&\\
dU_s&=-f(s,Y_s)ds+Z_sdW_s,& s\in[t,T],&\\
Y_t&=y, U_{T}=v(T-\epsilon,Y_{T}),& &
\end{aligned}\right.
\end{equation}
and the corresponding PDE
\begin{equation}\label{originalpde}
\left\{
\begin{aligned}
&-w_t(t,y)-\mathcal Lw(t,y)-f(t,y)=0, &(t,y)\in [\epsilon,T)\times R^d,&\\
&w(T, y)=v(T-\epsilon,y), & y\in \mathbb R^d&
\end{aligned}\right.
\end{equation}
where $f(t,y):=F(y,v(t-\epsilon,y))$ and $F$ is defined in (\ref{nonlinearity}). Recalling the polynomial growth condition on the cost coefficients in Assumption \ref{Ass2} and the polynomial growth property of the solution $v$ established in Theorem \ref{thm-existence}, we know that $f\in C_{m^\prime}([\epsilon,T]\times\mathbb R^d)$, for some $m^{\prime}\geq m.$ Together with Assumption \ref{Ass1} and the fact that $ v(T-\epsilon,\cdot)\in C_{m}(\mathbb R^d)$, we conclude from \cite[Theorem 2.1]{Karoui1997} that the system admits a unique solution 
\[
	(Y^{t,y}, U^{t,y}, Z^{t,y}) \in S^2_{\mathcal{F}}(t,T;\mathbb R^d)\times S^2_{\mathcal{F}}(t,T;\mathbb R)\times L^2_{\mathcal{F}}(t,T;\mathbb R^{1 \times \tilde d}). 
\] 

Let $ w(t,y):=U^{t,y}_t.$ By the Feynman-Kac formula \cite[Theorem 3.2]{Pardoux1999}, $w$ is the unique viscosity solution of \eqref{originalpde} with driver $f$. Due to the time-homogeneity of the PDE in \eqref{pde-v}, viscosity solutions stay viscosity solutions when shifted in time. Let $\tilde v(t,y):=v(t-\epsilon,y)$ on $[\epsilon,T]$. By the definition of $f,$ we see that $\tilde v$ is also a viscosity solution of \eqref{originalpde} with driver $f$ on $[\epsilon,T]$. Hence it follows that $w=\tilde v$. By the Markov property, we have for any $r\in[t,T]$ that $0\leq U^{t,y}_r=v(r-\epsilon,Y^{t,y}_r).$ Thus $U^{t,y}$ is also a solution to the following FBSDE:
\begin{equation*}
\left\{
\begin{aligned}
dY_s&=b(Y_s)ds+\sigma(Y_s)dW_s, &s\in[t,T],&\\
dU_s&=-F(Y_s, U_s)ds+Z_sdW_s, &s\in[t,T],&\\
Y_t&=y, U_{T}=v(T-\epsilon,Y_{T}).& &
\end{aligned}\right.
\end{equation*}

% We then can conclude that for any $T_0<T,$ the solution $U^{t,y}$ to the corresponding FBSDE system can always be expressed by the viscosity solution $v$. Therefore,  a global solution to the BSDE on $[0,T)$ is obtained. 
\end{proof}
%We extend the domain of the solution by setting $U^{t,y}_r=v(r,Y^{t,y}_r)$ for $0\leq t\leq r<\epsilon$. 
For any $\epsilon\in(0,T),$ we can restrict our interval on $[t,T-\epsilon]$ and repeat the arguments above without shifting in time. This yields a solution $ (\tilde{U}^{t,y}, \tilde{Z}^{t,y})\in S^2_{\mathcal{F}}(t,T-\epsilon;\mathbb R)\times L^2_{\mathcal{F}}(t,T-\epsilon;\mathbb R^{1 \times \tilde d})$ satisfying that $\tilde{U}^{t,y}_t=v(t,y)$ and for any $0\leq t\leq r\leq s<T-\epsilon,$ 
\begin{equation*}
\left\{
\begin{aligned}
dY_s&=b(Y_s)ds+\sigma(Y_s)dW_s, &s\in[t,T-\epsilon],&\\
d\tilde{U}_s&=-F(Y_s, \tilde{U}_s)ds+\tilde{Z}_sdW_s, &s\in[t,T-\epsilon],&\\
Y_t&=y, \tilde{U}_{T-\epsilon}=v(T-\epsilon,Y_{T-\epsilon}).& &
\end{aligned}\right.
\end{equation*}
Since $\epsilon$ is arbitrary, a global solution on $[0,T)$ can be obtained.
%Using similar arguments to the proof of \cite[Theorem 2.2]{Ankirchner2014}, we can obtain a solution to the BSDE with singular terminal value by letting $\epsilon$ goes to 0. 
\begin{corollary}
	Under Assumptions \ref{Ass1}, \ref{Ass2}, there exists  processes $ (\tilde{U}^{t,y}, \tilde{Z}^{t,y})\in S^2_{\mathcal{F}}(t,T^-;\mathbb R)\times L^2_{\mathcal{F}}(t,T^-;\mathbb R^{1 \times \tilde d})$ satisfying that $\tilde{U}^{t,y}_t=v(t,y)$ and for any $0\leq t\leq r\leq s<T,$
	\begin{equation}\label{coro-BSDE}
\tilde{U}^{t,y}_r=\tilde{U}^{t,y}_s+\int^s_r F(Y^{t,y}_{\rho},\tilde{U}^{t,y}_{\rho})d\rho- \int^s_r \tilde{Z}^{t,y}_{\rho} dW_{\rho}.
	\end{equation}

\end{corollary}

The following lemma is key to the verification argument.

\begin{lemma} \label{ito} Fix $\epsilon\in(0,T)$ and $(t,y)\in [\epsilon,T)\times\mathbb R^d$. For every $(\xi,\mu)\in\bar{\mathcal A}(t,x)$ and $s\in[t,T)$,
\begin{equation*} 
	v(t-\epsilon,y)|x|^2 \leq \mathbb E\left[v(s-\epsilon,Y_s^{t,y})|X_s^{\xi,\mu}|^2\right]+ \mathbb E\left[\int_t^sc(Y_r^{t,y},X_r^{\xi,\mu},\xi_r,\mu_r)\,dr \right].
\end{equation*}
\end{lemma}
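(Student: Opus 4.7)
\textbf{Proof plan for Lemma \ref{ito}.}

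The plan is to apply an It\^o/product-rule argument to $v(r,Y_r^{t,y})|X_r^{\xi,\mu}|^2$ using the BSDE representation of $v$ provided by Proposition \ref{prop-BSDE}, and then read the desired inequality off from the Hamiltonian minimization identity in Lemma \ref{lemma-hjb}. Writing $U_r := v(r,Y_r^{t,y})$ and $Z_r := Z^{t,y}_r$, Proposition \ref{prop-BSDE} gives $dU_r = -F(Y_r^{t,y},U_r)\,dr + Z_r\,dW_r$ on $[t,T)$. The controlled process satisfies $d|X_r^{\xi,\mu}|^2 = -2X_{r-}^{\xi,\mu}\xi_r\,dr + \bigl[(X_{r-}^{\xi,\mu}-\mu_r)^2 - (X_{r-}^{\xi,\mu})^2\bigr]\,dN_r$. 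Since $U$ is continuous and $|X^{\xi,\mu}|^2$ has no continuous martingale part, the quadratic covariation vanishes, and the semimartingale product rule yields
\begin{equation*}
\begin{aligned}
d\bigl(U_r|X_r^{\xi,\mu}|^2\bigr) &= \bigl[-2U_r X_r^{\xi,\mu}\xi_r + \theta U_r(-2X_r^{\xi,\mu}\mu_r+\mu_r^2) - F(Y_r^{t,y},U_r)|X_r^{\xi,\mu}|^2\bigr]\,dr \\
&\quad + |X_{r-}^{\xi,\mu}|^2 Z_r\,dW_r + U_{r-}\bigl[(X_{r-}^{\xi,\mu}-\mu_r)^2-(X_{r-}^{\xi,\mu})^2\bigr]\,d\widetilde N_r,
\end{aligned}
\end{equation*}
where $\widetilde N_r=N_r-\theta r$ is the compensated Poisson process.

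Next I would integrate from $t$ to $s$, take expectations, and kill the two stochastic integrals. Since $(\xi,\mu)\in\bar{\mathcal A}(t,x)$ the portfolio process is monotone so $|X_r^{\xi,\mu}|\leq|x|$, and the a priori estimate \eqref{v-estimate} together with polynomial growth of $\eta$ and the moment bound \eqref{Y-estimate} makes $U_r|X_r^{\xi,\mu}|^2$ uniformly integrable on $[t,s]$ for every fixed $s<T$. A localization argument using stopping times $\tau_n = \inf\{r\ge t: |Y_r^{t,y}|+\int_t^r|Z_\rho|^2\,d\rho\geq n\}\wedge s$ together with $\xi\in L^4_{\mathcal F}$ and boundedness of $X^{\xi,\mu}$ lets me pass to the limit so that both the Brownian and the compensated Poisson integrals have zero expectation, yielding
\begin{equation*}
E\bigl[U_s|X_s^{\xi,\mu}|^2\bigr] - v(t,y)|x|^2 = E\!\int_t^s\!\bigl[-2U_r X_r^{\xi,\mu}\xi_r+\theta U_r(-2X_r^{\xi,\mu}\mu_r+\mu_r^2)-F(Y_r^{t,y},U_r)|X_r^{\xi,\mu}|^2\bigr]dr.
\end{equation*}

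To finish, I invoke Lemma \ref{lemma-hjb}: with the ansatz $V=v|\cdot|^2$, the Hamiltonian identity \eqref{optimal-hamiltonian} and the fact that the infimum defining $F(y,v)|x|^2$ is attained at $(\xi^*,\mu^*)$ imply the pointwise inequality
\begin{equation*}
F(Y_r^{t,y},U_r)|X_r^{\xi,\mu}|^2 \leq -2 U_r X_r^{\xi,\mu}\xi_r + \theta U_r(-2X_r^{\xi,\mu}\mu_r+\mu_r^2) + c(Y_r^{t,y},X_r^{\xi,\mu},\xi_r,\mu_r)
\end{equation*}
for every admissible $(\xi_r,\mu_r)$. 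Substituting this into the previous display rearranges to the claimed inequality after using $U_s = v(s,Y_s^{t,y})$.

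The main obstacle I anticipate is the rigorous justification of applying It\^o's formula and of discarding the two local-martingale terms: $v$ itself is only a viscosity solution and not smooth, which is precisely why I go through the BSDE representation $U_r=v(r,Y_r^{t,y})$ rather than trying to differentiate $v$. Once that representation is in hand, the monotonicity of $X^{\xi,\mu}$ (giving the uniform bound $|X^{\xi,\mu}|\leq|x|$), together with $\xi\in L^4_{\mathcal F}$ and $Z\in L^2_{\mathcal F}(t,T^-)$, provides enough integrability for a standard localization-plus-dominated-convergence argument to make the martingale terms genuinely vanish in expectation on $[t,s]$ for each $s<T$.
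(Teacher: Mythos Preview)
Your proposal is correct and follows essentially the same route as the paper: use the BSDE representation $U_r=v(r,Y_r^{t,y})$ from Proposition~\ref{prop-BSDE}, apply the semimartingale product rule to $U_r|X_r^{\xi,\mu}|^2$, kill the two stochastic integrals using monotonicity of $X^{\xi,\mu}$ and $Z\in L^2_{\mathcal F}(t,s)$, and then conclude via the Hamiltonian inequality from Lemma~\ref{lemma-hjb}. The only cosmetic difference is that the paper verifies directly that the Brownian and compensated Poisson integrals are true martingales on $[t,s]$ (since $|X_r^{\xi,\mu}|,|\mu_r|\le|x|$ and $Z\in L^2_{\mathcal F}(t,s)$ give square-integrable integrands outright), whereas you route through a localization-plus-dominated-convergence argument; both work, but the direct verification is shorter here.
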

\begin{proof}
By Proposition \ref{prop-BSDE}, we know that $(U^{t,y},Z^{t,y})$ solves the following BSDE:
$$U^{t,y}_t=U_{s}^{t,y}+\int^s_tF(Y^{t,y}_r,U^{t,y}_r)dr-\int^s_t Z^{t,y}_rdW_r.$$
This allows us to apply to $U_{s}^{t,y}|X_{s}^{\xi,\mu}|^2$ the classical integration by parts formula for semimartingales in order to obtain
\begin{multline*}
	U_{t}^{t,y}|x|^2 = U_{s}^{t,y}|X_{s}^{\xi,\mu}|^2 +\int_t^{s}\big\{ F(Y^{t,y}_r,U^{t,y}_r)|X_r^{\xi,\mu}|^2\\
	+2\xi_r U^{t,y}_r\sgn(X_r^{\xi,\mu})|X_r^{\xi,\mu}|-\theta U^{t,y}_r(|X_r^{\xi,\mu}-\mu_r|^2-|X_r^{\xi,\mu}|^2)\big\}\,dr\\ 
	-\int_t^{s}\sigma(Y_r^{t,y}) Z^{t,y}_r|X_r^{\xi,\mu}|^2\,dW_r  -\int_t^{s} U^{t,y}_r(|X_{r-}^{\xi,\mu}-\mu_r|^2-|X_{r-}^{\xi,\mu}|^2)\,d\widetilde N_r,
\end{multline*}
where $\widetilde N_r= N_r-\theta r$ denotes the compensated Poisson process. Moreover,  $|X^{\xi,\mu}|\leq |x|$ and $|\mu|\leq |x|$, due to the monotonicity of the portfolio process. Furthermore,$$\int_t^{s}\sigma(Y_r^{t,y}) Z^{t,y}_r|X_r^{\xi,\mu}|^2\,dW_r$$
is a uniformly integrable martingale because
$$2\mathbb E\left[\left(\int_t^{s}|\sigma(Y_r^{t,y})|^2\cdot |Z^{t,y}_r|^2|X_r^{\xi,\mu}|^4\,dr\right)^{1/2}\right]\leq \mathbb E\left(\sup_{t\leq r\leq s}|\sigma(Y_r^{t,y})|^2+|x|^4\int^s_t|Z^{t,y}_r|^2\,dr\right)<\infty.$$
As a consequence, the above stochastic integrals are true martingales. Hence, recalling \eqref{optimal-hamiltonian},
\begin{align}
	U_{t}^{t,y}|x|^2 
	&=\mathbb E\left[U_{s}^{t,y}|X_{s}^{\xi,\mu}|^2\right] +\mathbb E\left[\int_t^{s} c(Y_r^{t,y},X_r^{\xi,\mu},\xi_r,\mu_r)\,dr \right] \nonumber\\
	&\quad+\mathbb E\left[\int_t^{s} \big\{ F(Y^{t,y}_r,U^{t,y}_r)|X_r^{\xi,\mu}|^2 -H(r,Y_r^{t,y},X_r^{\xi,\mu},\xi_r,\mu_r,U_{r}^{t,y}|X_r^{\xi,\mu}|^2)\big\}\,dr \right]\nonumber\\
	&\leq \mathbb E\left[U_{s}^{t,y}|X_{s}^{\xi,\mu}|^2\right]  +\mathbb E\left[\int_t^{s} c(Y_r^{t,y},X_r^{\xi,\mu},\xi_r,\mu_r)\,dr \right]. \label{suboptimal-estimate}	
\end{align}
Since $U_{t}^{t,y}=v(t-\epsilon,y), U^{t,y}_r=v(r-\epsilon,Y^{t,y}_r),$ we have 
\[
	v(t-\epsilon,y)|x|^2 \leq 
	\mathbb E\left[v(s-\epsilon,Y_{s}^{t,y})|X_{s}^{\xi,\mu}|^2\right] + 
	\mathbb E\left[\int_t^{s} c(Y_r^{t,y},X_r^{\xi,\mu},\xi_r,\mu_r)\,dr \right].
\]
\end{proof}
We are now ready to carry out the verification argument.

\begin{proof}[Proof of Proposition \ref{thm-verifcation}]
Let $(\xi,\mu)\in\bar{\mathcal A}(t,x)$. 
By the liquidation constraint of $X^{\xi,\mu},$ letting $s \to T$ yields
\[\mathbb E\left[v(s-\epsilon,Y_{s}^{t,y})|X_{s}^{\xi,\mu}|^2\right] \rightarrow 0.\]
Hence,
\begin{equation*} 
 v(t-\epsilon,y)|x|^2\leq J(t,y,x;\xi,\mu).
\end{equation*}
Finally, by letting $\epsilon\rightarrow 0$, we conclude
\begin{equation*} 
v(t,y)|x|^2\leq J(t,y,x;\xi,\mu).
\end{equation*}
 on $[0,T)\times\mathbb R^d$ by the continuity of $v$ and the nonnegativity of $J$.

Using similar arguments to the proof of Proposition \ref{ito} on the BSDE \eqref{coro-BSDE}, we can obtain that
\[ v(t,y)|x|^2 \leq\mathbb E\left[v(s,Y_s^{t,y})|X_s^{\xi,\mu}|^2\right]+\mathbb  E\left[\int_t^sc(Y_r^{t,y},X_r^{\xi,\mu},\xi_r,\mu_r)\,dr \right].\]
By Lemma~\ref{lemma-hjb} equality holds in the preceding inequality if $\xi=\xi^{*}$ and $\mu=\mu^{*}.$
Thus,
\begin{equation*}
\begin{aligned}
v(t,y)|x|^2 &= \mathbb E\left[v(s,Y_s^{t,y})|X_s^{\xi^*,\mu^*}|^2\right]+\mathbb  E\left[\int_t^sc(Y_r^{t,y},X_r^{\xi^*,\mu^*},\xi^*_r,\mu^*_r)\,dr
\right]\\
&\geq \mathbb  E\left[\int_t^sc(Y_r^{t,y},X_r^{\xi^*,\mu^*},\xi^*_r,\mu^*_r)\,dr
\right]
\end{aligned}
\end{equation*}
from which we conclude that
\begin{equation*} 
v(t,y)|x|^2\geq J(t,y,x;\xi^{*},\mu^{*}).
\end{equation*}

%By Lemma~\ref{lemma-hjb} equality holds in \eqref{suboptimal-estimate} if $\xi=\xi^{*,\epsilon}$ and $\mu=\mu^{*,\epsilon}$ where
%\begin{equation*} 
%\xi_s^{*,\epsilon}= \frac{v(s-\epsilon,Y_s^{t,y})}{\eta(Y_s^{t,y})}X_s^{*,\epsilon} \quad \text{ and } \quad \mu_s^{*,\epsilon}=\frac{v(s-\epsilon,Y_s^{t,y})}{\gamma(Y_s^{t,y})+v(s-\epsilon,Y_s^{t,y})}X_{s-}^{*,\epsilon}.
%\end{equation*}
%Thus,
%\begin{equation} \label{sub-epsilon}
%\begin{aligned}	
%	v(t-\epsilon,y)|x|^2 &=  \mathbb E[v(s-\epsilon,Y_s^{t,y})|X_s^{\xi^*,\mu^*}|^2]+ \mathbb E\left[\int_t^s c(Y_r^{t,y},X_r^{\xi^{*,\epsilon},\mu^{*,\epsilon}},\xi^{*,\epsilon}_r,\mu^{*,\epsilon}_r)\,dr\right] \\
%	& \longrightarrow J(t,y,x;\xi^{*,\epsilon},\mu^{*,\epsilon})   \quad \text{as } s \to T.
%\end{aligned}
%\end{equation}
%Similarly to the caculation in \eqref{liquidation}, we have that
%\[v(s-\epsilon,Y_s^{t,y})|X_s^{*,\epsilon}|\leq C\eta(Y_s^{t,y}).\]
%Noticing that $|X_{s}^{*,\epsilon}|\leq|X_{s-}^{*,\epsilon}|\leq |x|,$ 
%\begin{equation*} 
%\begin{aligned}\int_t^T c(Y_r^{t,y},X_r^{\xi^{*,\epsilon},\mu^{*,\epsilon}},\xi^{*,\epsilon}_r,\mu^{*,\epsilon}_r)\,dr&\leq \int_t^T\left( \frac{v^2(s-\epsilon,Y_s^{t,y})}{\eta(Y_s^{t,y})}|X_s^{*,\epsilon}|^2+ \gamma(Y_s^{t,y})|x|^2+\lambda|x|^2\right)\,ds\\
%&\leq \int_t^T C\left(\eta(Y_s^{t,y})+ \gamma(Y_s^{t,y})+\lambda(Y_s^{t,y})\right)\,ds
%\end{aligned}
%\end{equation*}
%By the dominated convergence theorem and the continuity of $v$, we let $\epsilon\rightarrow 0$ in \eqref{sub-epsilon} to obatin that 
%\begin{equation*} 
%v(t,y)|x|^2\geq J(t,y,x;\xi^{*},\mu^{*}).
%\end{equation*}
This shows that the strategy $(\xi^*,\mu^*)$ is indeed optimal. \end{proof}
}
%%%%%%%%%%%%%%%%%%%%%%%%%%%%%%%%%%%%%%%%%%
%%%%%%%%%%%%%%%%%%%%%%%%%%%%%%%%%%%%%%%%%%
%%%%%%%%%%%%%%%%%%%%%%%%%%%%%%%%%%%%%%%%%%
%%%%%%%%%%%%%%%%%%%%%%%%%%%%%%%%%%%%%%%%%%

%%%%%%%%%%%%%%%%%%%%%%%%%%%%%%%%%%%%%%%%%%
%%%%%%%%%%%%%%%%%%%%%%%%%%%%%%%%%%%%%%%%%%
%%%%%%%%%%%%%%%%%%%%%%%%%%%%%%%%%%%%%%%%%%
%%%%%%%%%%%%%%%%%%%%%%%%%%%%%%%%%%%%%%%%%%

\section{Uniqueness in the non-Markovian framework} \label{non-markov}

{In this section we assume that the filtration is solely generated by the Brownian motion.} The existence of a minimal nonnegative solution 
\[
(\mathcal Y,\mathcal Z)\in L_{\mathcal F}^2(\Omega;C([0,T^-];\mathbb R_+))\times L_\mathcal F^2(0,T^-;\mathbb R^{1\times \tilde{d}})
\] 
to the BSDE
\begin{equation} \label{BSDE-Y}
-d\mathcal{Y}_t=\left\{\lambda_t-\frac{|\mathcal{Y}_t|^{2}}{\eta_t}\right\}dt-\mathcal{Z}_t\,dW_t, \quad 0\leq t<T; \quad \lim_{t\rightarrow T} \mathcal{Y}_t=+\infty
\end{equation}
has been established in~\cite{Ankirchner2014} under the assumption that $\eta\in L^2_\mathcal F(0,T;\mathbb R_+)$, $\eta^{-1}\in L^1_{\mathcal F}(0,T;\mathbb R_+)$, $\lambda \in L^2_{\mathcal F}(0,T^-;\mathbb R_+)$, and $\mathbb E[\int_0^T(T-t)^2\lambda_t\,dt]<\infty.$ 

In this section we extend our uniqueness result to non-Markovian models and prove the existence of a unique nonnegative solution under the following conditions; they correspond to those in the Markovian setting.
\begin{assumption}\label{Ass3}
	\begin{itemize} 
		\item[(i)] The process $\eta$ is a positive It\^o diffusion satisfying that $d\eta_t=\alpha_t\,dt+\beta_t\,dW_t$ with $(\alpha,\beta) \in L^2_{\mathcal F}(0,T;\mathbb R\times\mathbb R^{1\times \tilde{d}})$.
		\item[(ii)]The processes $\eta,\eta^{-1}\in L^{2}_{\mathcal F}(\Omega;C([0,T];\mathbb R))$ and $\eta^{-1}\alpha\in L^\infty_{\mathcal F}(0,T;\mathbb R)$.
		\item[(iii)] There exists  a positive It\^o diffusion $h_t$ such that $d h_t=\alpha^{\prime}_t\,dt+\beta^{\prime}_t\,dW_t$ with $(\alpha^{\prime},\beta^{\prime}) \in L^2_{\mathcal F}(0,T;\mathbb R\times\mathbb R^{1\times \tilde{d}})$ and $h^{-1}\lambda, h^{-1}\alpha^{\prime} \in L^\infty_{\mathcal F}(0,T;\mathbb R)$. 
	\end{itemize} 
\end{assumption}
\begin{proposition} \label{a-priori-bound}
	Let Asssumption \ref{Ass3} hold. Set $\tau:=1/\|\eta^{-1}\alpha\|_{L^\infty}\wedge T$ and $\tilde{K}:=\| h^{-1}\alpha^{\prime}\|_{L^\infty}+\|h^{-1}\lambda\|_{L^\infty}.$ For any solution $$(\mathcal{Y}, \mathcal{Z})\in L_{\mathcal F}^2(\Omega;C([0,T^-];\mathbb R_+))\times L_\mathcal F^2(0,T^-;\mathbb R^{1\times \tilde{d}})$$ to~\eqref{BSDE-Y} the following estimates hold for $T-\tau\leq t< T$:
	\begin{equation} \label{bsde-estimate}
	\eta_t\left(\frac{1}{T-t}-\|\eta^{-1}\alpha\|_{L^\infty}\right)\leq \mathcal{Y}_t\leq \eta_t\left(\frac{1}{T-t}+\|\eta^{-1}\alpha\|_{L^\infty}\right)+e^{\tilde K(T-t)}h_t. %\quad T-\tau   \leq t< T.
	\end{equation}
\end{proposition}
\begin{proof} For $0<\epsilon<\tau$ we define $(\overline{\mathcal{Y}_t}^\epsilon)_{t\in[T-\tau,T-\epsilon)}$ by
	\[\overline{\mathcal{Y}_t}^\epsilon=\eta_t\left(\frac{1}{T-\epsilon-t}+\|\eta^{-1}\alpha\|_{L^\infty}\right)+e^{\tilde K(T-\epsilon-t)}h_t.\]
	We will show that these processes are supersolutions to \eqref{BSDE-Y} but with the singularity at $t=T-\epsilon$, \[\lim_{t\rightarrow T-\epsilon}\overline{\mathcal{Y}_t}^\epsilon=+\infty.\] 
	Precisely, 
	\[-d\overline{\mathcal{Y}_t}^\epsilon=g^\epsilon(t,\overline{\mathcal{Y}_t}^\epsilon)\,dt-\overline{\mathcal{Z}_t}^\epsilon\,dW_t, \qquad T-\tau\leq t<T-\epsilon,\]
	where 
	\begin{align*}
	g^\epsilon(t,\overline{\mathcal{Y}_t}^\epsilon):=&-\frac{\eta_t}{(T-\epsilon-t)^2}-\alpha_t\left(\frac{1}{T-\epsilon-t}+\|\eta^{-1}\alpha\|_{L^\infty}\right)\\
	&+\tilde{K}e^{\tilde K(T-\epsilon-t)}h_t-e^{\tilde K(T-\epsilon-t)}\alpha^{\prime}_t
	\end{align*}
	and $\overline{\mathcal{Z}}^{\epsilon}\in\bigcap_{t\in[T-\tau,T-\epsilon)} L^2_{\mathcal F}(T-\tau,t;\mathbb R^{1\times\tilde{d}})$.
	A calculation as in the proof of Proposition \ref{local-estimate} verifies that for all $T-\tau\leq t<T-\epsilon$,
	\[g^\epsilon(t,\overline{\mathcal{Y}_t}^\epsilon)\geq \lambda_t-\frac{|\overline{\mathcal{Y}_t}^\epsilon|^2}{\eta_t}=:f(t,\overline{\mathcal{Y}_t}^\epsilon).\]
	 {Indeed, applying the inequality $(u+v+w)^2\geq u^2+2uv$ for $u,v,w\geq 0$ to the term $|\overline{\mathcal{Y}_t}^\epsilon|^2$,  we obtain that
\[\frac{|\overline{\mathcal{Y}_t}^\epsilon|^2}{\eta_t}\geq \frac{\eta_t^2+2\eta_t^2\|\eta^{-1}\alpha\|_{L^\infty}(T-\epsilon-t)}{\eta_t(T-\epsilon-t)^2}.
\]
Noting that $\tau:=1/\|\eta^{-1}\alpha\|_{L^\infty}\wedge T$, $\|\eta^{-1}\alpha\|_{L^\infty}(T-\epsilon-t)\leq 1$ for $t\in [T-\tau, T-\epsilon),$ we have that
\begin{equation*}
\begin{aligned}
&\frac{|\overline{\mathcal{Y}_t}^\epsilon|^2}{\eta_t}-\frac{\eta_t}{(T-\epsilon-t)^2}-\alpha_t\left(\frac{1}{T-\epsilon-t}+\|\eta^{-1}\alpha\|_{L^\infty}\right)\\
\geq &\frac{2\eta_t\|\eta^{-1}\alpha\|_{L^\infty}-\alpha_t(1+\|\eta^{-1}\alpha\|_{L^\infty}(T-\epsilon-t))}{(T-\epsilon-t)}\\
\geq &\frac{\eta_t\|\eta^{-1}\alpha\|_{L^\infty}(1+\|\eta^{-1}\alpha\|_{L^\infty}(T-\epsilon-t))-\alpha_t(1+\|\eta^{-1}\alpha\|_{L^\infty}(T-\epsilon-t))}{(T-\epsilon-t)}\\
=&\frac{(\eta_t\|\eta^{-1}\alpha\|_{L^\infty}-\alpha_t)(1+\|\eta^{-1}\alpha\|_{L^\infty}(T-\epsilon-t))}{(T-\epsilon-t)}\\
\geq &0
\end{aligned}
\end{equation*}
Recalling that $\tilde{K}:=\| h^{-1}\alpha^{\prime}\|_{L^\infty}+\|h^{-1}\lambda\|_{L^\infty}$, we have that
$\tilde{K}e^{\tilde K(T-\epsilon-t)}h_t-e^{\tilde K(T-\epsilon-t)}\alpha^{\prime}_t\geq e^{\tilde K(T-\epsilon-t)}\lambda_t\geq \lambda_t.$
Therefore, we can conclude that $g^\epsilon(t,\overline{\mathcal{Y}_t}^\epsilon)\geq \lambda_t-\frac{|\overline{\mathcal{Y}_t}^\epsilon|^2}{\eta_t}.$}
	
	We now consider the difference of $\mathcal{Y}$ and $\overline{\mathcal{Y}}^\epsilon$ for $T-\tau\leq t\leq s<T-\epsilon$:
	\begin{align*}
	\overline{\mathcal{Y}_t}^\epsilon-\mathcal{Y}_t&=\mathbb E\left[\overline{\mathcal{Y}_s}^\epsilon-\mathcal{Y}_s+\int_t^s g^\epsilon(r,\overline{\mathcal{Y}_r}^\epsilon)\,dr-\int_t^s f(r,\mathcal{Y}_r)\,dr\Big|\mathcal{F}_t\right]\\
	& \geq\mathbb E\left[\overline{\mathcal{Y}_s}^\epsilon-\mathcal{Y}_s+\int_t^s f(r,\overline{\mathcal{Y}_r}^\epsilon)-f(r,\mathcal{Y}_r)\,dr\Big|\mathcal{F}_t\right]\\
	&=\mathbb E\left[\overline{\mathcal{Y}_s}^\epsilon-\mathcal{Y}_s+\int_t^s (\overline{\mathcal{Y}_r}^\epsilon-\mathcal{Y}_r)\Delta_r\,dr\Big|\mathcal{F}_t\right]
	\end{align*}
	where 
	\begin{equation*}
	{\Delta_r}=\left\{
	\begin{aligned}
	&\frac{f(r,\overline{\mathcal{Y}_r}^\epsilon)-f(r,\mathcal{Y}_r)}{\overline{\mathcal{Y}_r}^\epsilon-\mathcal{Y}_r},&\quad \textit{ if } \quad \overline{\mathcal{Y}_r}^\epsilon-\mathcal{Y}_r\neq 0,&&\\
	&0,&\quad else.&&
	\end{aligned}\right.
	\end{equation*}
	Note that $\Delta\leq 0.$ By the explicit representation of the solution to linear BSDEs, 
	\begin{equation*}
	\overline{\mathcal{Y}_t}^\epsilon-\mathcal{Y}_t\geq\mathbb E\left[(\overline{\mathcal{Y}_s}^\epsilon-\sup_{t\leq s\leq T-\epsilon}\mathcal{Y}_s)\exp\left(\int_t^s\Delta_r\,dr\right)\right].
	\end{equation*}
	{Since $\overline{\mathcal{Y}_s}^\epsilon\geq 0,$ $\mathbb E[\sup_{t\leq s\leq T-\epsilon}\mathcal{Y}_s]<+\infty$ due to $\mathcal Y\in  L_{\mathcal F}^2(\Omega;C([0,T^-];\mathbb R_+)),$ we can apply Fatou's lemma to the expectation above as $s\rightarrow T-\epsilon$ to obtain that $\overline{\mathcal{Y}_t}^\epsilon-\mathcal{Y}_t\geq 0.$} Taking $\epsilon\rightarrow 0$ we obtain the upper estimate. The lower estimate can be established by similar arguments.
\end{proof}
{
\begin{lemma}
	Suppose that Asssumption \ref{Ass3} holds. Let $(\mathcal{Y}, \mathcal{Z})$ be a solution of \eqref{BSDE-Y} in the space $L_{\mathcal F}^2(\Omega;C([0,T^-];\mathbb R_+))\times L_\mathcal F^2(0,T^-;\mathbb R^{1\times \tilde{d}})$. Let $X^*_t=\exp(-\int^t_0\frac{\mathcal Y_s}{\eta_s}\,ds)$ denote the associated portfolio process. Then $X^*\mathcal Z\in L^2_{\mathcal F}(0,T;\mathbb R).$
\end{lemma}
\begin{proof}
	Let $M_t=\mathcal Y_tX^*_t+\int^t_0\lambda_sX^*_s\,ds.$ Integration by parts yields 
	\begin{equation}\label{M-t}
	dM_t=X^*_t\mathcal Z_tdW_t.
	\end{equation}
	Hence, $M$ is a nonnegative local martingale on $[0,T)$ and in particular a nonnegative supermartingale. Thus, it converges almost surely in $\mathbb R$ as $t$ goes to $T$.
	Similarly to \eqref{liquidation}, we use the lower estimate in \eqref{bsde-estimate} to obtain that for $s\in[T-\tau,T)$
	$$|X^*_s|\leq C(T-s).$$
 In view of the upper estimate in \eqref{bsde-estimate}, we have that
	$$\mathbb E\left[\sup_{T-\tau\leq t\leq s}|\mathcal Y_tX^*_t|^2\right]\leq C\mathbb E\left[\sup_{T-\tau\leq t\leq T}(|\eta_t|^2+|h_t|^2)\right],$$
	where the constant $C$ is independent of $s.$ Thus, applying the dominated convergence theorem implies
	\begin{equation*}
	\begin{aligned}
\mathbb E\left[\sup_{0\leq t\leq T}|M_t|^2\right]&\leq C\left(\mathbb E\left[\sup_{0\leq t\leq T-\tau}|\mathcal Y_t|^2\right]+\mathbb E\left[\sup_{T-\tau\leq t\leq T}(|\eta_t|^2+|h_t|^2)\right]+\mathbb E\left[\int^T_0|\lambda_s|^2\,ds\right]\right)\\
&<+\infty.
	\end{aligned}
	\end{equation*}
Recalling the equation \eqref{M-t}, we have that $X^*\mathcal Z\in L^2_{\mathcal F}[0,T;\mathbb R)$ and that $M$ is indeed a nonnegative martingale on $[0,T].$
	\end{proof}
It follows from \cite[{Proposition 4.4}]{Ankirchner2014} that $\mathcal Y$ is the minimal solution of \eqref{BSDE-Y}. Therefore, we can obtain the uniqueness result.
\begin{theorem}
	Under Asssumption \ref{Ass3}, there exists a unique solution to the BSDE~\eqref{BSDE-Y} in $L_{\mathcal F}^2(\Omega;C([0,T^-];\mathbb R_+))\times L_\mathcal F^2(0,T^-;\mathbb R^{1\times \tilde{d}})$.
\end{theorem}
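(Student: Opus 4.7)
The plan is first to note that existence is immediate from \cite{Ankirchner2014}: under the standing assumptions of this section, $\lambda\le\|\eta^{-1}\lambda\|_{L^\infty}\eta$ together with $\eta,\eta^{-1}\in L^2_\mathcal F(\Omega;C([0,T];\mathbb R))$ implies all of Ankirchner's integrability hypotheses, so the minimal nonnegative solution they construct lies in the stated space. The substantive work is therefore uniqueness, which I would prove by a comparison-type argument carried out directly at the BSDE level, in the same spirit as the Markovian comparison principle of Section~\ref{existence}.

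Let $(Y^1,Z^1)$ and $(Y^2,Z^2)$ be two nonnegative solutions and set $\bar Y:=Y^1-Y^2$, $\bar Z:=Z^1-Z^2$. Factoring the quadratic nonlinearity via $(Y^1)^2-(Y^2)^2=(Y^1+Y^2)\bar Y$ shows that $(\bar Y,\bar Z)$ solves the linear BSDE
$$-d\bar Y_t=-\frac{Y^1_t+Y^2_t}{\eta_t}\,\bar Y_t\,dt-\bar Z_t\,dW_t,\qquad 0\le t<T.$$
For fixed $T-\tau\le t\le s<T$ the coefficient $\beta_r:=-(Y^1_r+Y^2_r)/\eta_r$ is a.s.\ bounded on $[t,s]$ (both $Y^i_r$ and $\eta_r^{-1}$ are continuous on $[0,s]$), so the classical linear BSDE representation yields
$$\bar Y_t=\mathbb E\!\left[\bar Y_s\exp\!\left(\int_t^s\beta_r\,dr\right)\,\Big|\,\mathcal F_t\right].$$

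The key step is to insert the two-sided a priori estimates of Proposition~\ref{a-priori-bound}. Its lower bound gives $\beta_r\le -2/(T-r)+2\|\eta^{-1}\alpha\|_{L^\infty}$, hence $\exp(\int_t^s\beta_r\,dr)\le C(T-s)^2/(T-t)^2$, while its upper bound gives $|\bar Y_s|\le C\eta_s/(T-s)$ for $s$ close to $T$. Combining the two and letting $s\to T$, using $\eta\in L^2_\mathcal F(\Omega;C([0,T];\mathbb R))$ to justify the limit, we obtain
$$|\bar Y_t|\le\frac{C(T-s)}{(T-t)^2}\,\mathbb E[\eta_s\,|\,\mathcal F_t]\;\longrightarrow\;0,$$
so that $\bar Y\equiv 0$ on $[T-\tau,T)$; uniqueness of $\bar Z$ on the same interval is then forced by the BSDE itself.

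Having shown $Y^1_{T-\tau}=Y^2_{T-\tau}$ a.s., uniqueness on the compact interval $[0,T-\tau]$ is classical: a supersolution estimate analogous to the proof of~\eqref{v-estimate} bounds each $Y^i$ by $C\eta$ there, the driver $y\mapsto\lambda_t-y^2/\eta_t$ is monotone decreasing for $y\ge 0$, and standard BSDE uniqueness for monotone drivers with a common terminal value closes the argument. The main obstacle is controlling the singular coefficient $\beta_r\sim-2/(T-r)$; it is precisely the quadratic nonlinearity that supplies the factor $2$ in front of $1/(T-r)$, producing the decisive $(T-s)^2$ dampening that outpaces the $1/(T-s)$ blow-up of $|\bar Y_s|$ and makes the contraction estimate work.
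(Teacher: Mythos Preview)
Your argument is correct, and it takes a genuinely different route from the paper. The paper proves uniqueness \emph{indirectly via the control problem}: for any nonnegative solution $Y$ it applies the It\^o--Kunita formula to $Y_s|X^\xi_s|^2$, derives $Y_t|x|^2\le \Et{Y_s|X^\xi_s|^2+\int_t^s\eta_r|\xi_r|^2+\lambda_r|X^\xi_r|^2\,dr}$, and lets $s\to T$ using Corollary~\ref{cor-remaining-costs} to obtain $Y_t\le V_t(1)$. Since $V_t(1)$ is itself characterized (via \cite{Ankirchner2014}) as the \emph{minimal} nonnegative solution, every solution coincides with the minimal one. Your approach is instead a \emph{direct linearization at the BSDE level}: you factor the quadratic nonlinearity, obtain the explicit representation $\bar Y_t=\Et{\bar Y_s\exp(\int_t^s\beta_r\,dr)}$, and feed in the two-sided bounds of Proposition~\ref{a-priori-bound} to force $\bar Y\to 0$. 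Both arguments rest essentially on Proposition~\ref{a-priori-bound}; yours is more self-contained (no control interpretation or verification is needed) and nicely isolates the mechanism --- the factor $2$ in $\beta_r\sim -2/(T-r)$, coming from the quadratic term, is exactly what makes $(T-s)^2$ beat $1/(T-s)$. The paper's route, by contrast, has the advantage of simultaneously completing the verification and making the link to \cite{Ankirchner2014} explicit.

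Two small points worth tightening in your write-up. First, the representation $\bar Y_t=\Et{\bar Y_s\Gamma_s/\Gamma_t}$ requires the local martingale $\int\Gamma_r\bar Z_r\,dW_r$ to be a true martingale; this follows immediately from $\beta_r\le 0$ (so $\Gamma_r\le 1$) together with $\bar Z\in L^2_\mathcal F(0,T^-;\mathbb R^d)$, and is worth stating. Second, your separate treatment of $[0,T-\tau]$ via monotone-driver uniqueness is unnecessary: once $\bar Y_{T-\tau}=0$, the \emph{same} linear representation (with $\Gamma_r\le 1$ ensuring the martingale property) gives $\bar Y_t=\Et{\bar Y_{T-\tau}\exp(\int_t^{T-\tau}\beta_r\,dr)}=0$ for all $t\in[0,T-\tau]$, so no additional a priori bound on that interval is needed.
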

}

\section{Conclusion}

In this paper we established a novel comparison principle for viscosity solutions to HJB equations with singular terminal conditions arising in models of optimal portfolio liquidation under market impact. Our method is flexible enough to allow for possibly unbounded coefficients. The comparison principle allowed us to prove the existence of a unique continuous viscosity solution to the HJB equation and hence the existence of a unique optimal trading strategy. Without continuity it is typically impossible to study further regularity properties of the value function. Using our continuity result it is possible to prove that the value function is a $\pi$-strong solution to the HJB equation under mild additional conditions on the model parameters. This means that the value function can be approximated by $C^{1,2}$ function uniformly on compact sets. Several other avenues are open for future research. For instance, it would clearly be desirable to weaken the regularity assumption on the unbounded market impact coefficient $\eta$. The regularity assumption was needed to carry out the Taylor-type approximation of the value function at the terminal time. 
%Moreover, as far as we know, no numerical methods are yet available to solve PDEs with singular terminal values.  

%%%%%%%%%%%%%%%%%%%%%%%%%%%%%%%%%%%%%%%%%%
%%%%%%%%%%%%%%%%%%%%%%%%%%%%%%%%%%%%%%%%%%
%%%%%%%%%%%%%%%%%%%%%%%%%%%%%%%%%%%%%%%%%%

%\bibliographystyle{spmpsci}
\bibliographystyle{siam}
{\small
\bibliography{HX_bib}
}

\end{document}